\tikzstyle{every picture}+=[remember picture] 
\def\indicator{{\mathbbm{1}}}
\newcommand{\prob}[1]{\mathsf{Pr} \left( #1\right)}
\newcommand{\EXP}[1]{\mathbb{E} \left(#1 \right)}
\newcommand{\exponent}[1]{\exp \left( #1 \right)}
\newcommand{\selfnote}[1]{}
\newcommand{\error}[1]{}
\newcommand{\ignore}[1]{}
\def\Interval{{\mathcal I}} % Interval of Interest.
\def\TargetLocationSet{{\mathcal T}} % Set of target locations.
\def\TargetGrid{{\mathcal T}_{\mbox{\footnotesize
      g}}} % Target locations for the grid case
\def\TargetRandom{{\mathcal T}_{\mbox{\footnotesize rnd}}}
\def\SensingRegion{{\mathcal R}} % Sensing region of a sensor, depends
\def\ErrorEvent{{\mathcal E}}
\def\CouponErrorEvent{{E}}
\def\Adversaries{{\mathcal A}} % Set of Adversaries.
\def\Sensors{{\mathcal S}} % set of sensors
\newtheorem{theorem}{Theorem}%[section]
\newtheorem{lemma}{Lemma}%[section]
\newtheorem{remark}{Remark}%[section]
\title{On the Separability of Targets Using Binary Proximity
  Sensors$^*$\thanks{$^*$ This material is based upon work supported
    in part by the IRCC, IIT Bombay (Grant~No.~P09IRCC039), and the
    Bharti Centre for Communication at IIT Bombay.}}
\author{B.~Santhana Krishnan, Animesh Kumar, D.~Manjunath, and Bikash
  K.~Dey $^\dagger$\thanks{$^\dagger$ The authors are with the
    Electrical Engineering Department, IIT Bombay, Mumbai 400076,
    INDIA. Emails: {\tt\small
      \{skrishna,animesh,dmanju,bikash\}@ee.iitb.ac.in}} }
\begin{document}

\maketitle 

\thispagestyle{plain}
\pagestyle{plain}

\begin{abstract} 
  We consider the problem where a network of sensors has to detect the
  presence of targets at any of $n$ possible locations in a finite
  region. All such locations may not be occupied by a target. The data
  from sensors is fused to determine the set of locations that have
  targets. We term this the separability problem. In this paper, we
  address the separability of an asymptotically large number of static
  target locations by using binary proximity sensors. Two models for
  target locations are considered: (i) when target locations lie on a
  uniformly spaced grid; and, (ii) when target locations are
  i.i.d.~uniformly distributed in the area.  Sensor locations are
  i.i.d~uniformly distributed in the same finite region, independent
  of target locations. We derive conditions on the sensing radius and
  the number of sensors required to achieve
  separability. Order-optimal scaling laws, on the number of sensors
  as a function of the number of target locations, for two types of
  separability requirements are derived. The robustness or security
  aspects of the above problem is also addressed. It is shown that in
  the presence of adversarial sensors, which toggle their sensed
  reading and inject binary noise, the scaling laws for separability
  remain unaffected.
\end{abstract}

\section{Introduction}
\label{sec:intro}

\ignore{ We consider the following general problem. A finite region of
  interest, say $\Interval ,$ has $n$ points that are called
  \emph{target locations.} Each of these $n$ points contains at most
  one target. An ideal binary proximity sensor of sensing radius
  $r(n)$ outputs a `1' if one or more targets are present within its
  sensing radius $r$ and outputs `0' otherwise. $m(n)$ ideal binary
  proximity sensors are randomly deployed in $\Interval.$ The random
  location of the sensors models the lack of precise control during
  sensor-deployment but the random realization is assumed known. The
  objective is to find the \emph{target configuration,}---identify the
  set of target locations that contain a target---using the outputs of
  these $m(n)$ sensors. We determine order-optimal conditions on
  $r(n)$ and $m(n)$ to determine the target configuration. We call
  this the \textit{separability} problem and it is a significant
  generalization of the definition of separability described
  in~\cite{mudumbaiMI2008}. In this paper we study several variations
  of the separability problem.

  The separability problem described above has several motivations. An
  important application is in the detection of white spaces for
  cognitive radio. Here the target locations correspond to potential
  locations of primary radio transmitters. The objective would be to
  determine the location of these transmitters, and hence the white
  space where the secondary nodes could communicate. The sensors would
  be radio receivers detecting the presence of a signal above a
  specified threshold. Such a cognitive radio application has recently
  been considered in \cite{Vaze:whitespace:12}. A second example
  application is in the estimation of the population of rare wildlife
  in a reserve forest. There are locations in these forests that an
  animal is expected to visit e.g., watering hole or a salt lick. If
  the animal is solitary, e.g., tigers or leopards, then at most one
  of them will be present at any given time at any of these
  locations. Sensors can be placed to sense the presence or absence of
  an animal at these sites and the output from the sensors can be
  collated to estimate the population. Such a technique was employed
  to estimate the tiger population in the Nagarahole reserve forest in
  India~\cite{royleNKGA2009} where the forest was overlaid with an
  approximate grid and sensors were suitably placed to sense the
  presence of tigers in these sites.  As the third example, consider
  land-mines deployed on a strategic route by an enemy. It is not
  unreasonable to assume that mines have been placed randomly in the
  said area. Further, it is also reasonable to assume that, to clear
  the route, the sensors to detect the mines are deployed
  randomly. Once again, the output from the sensors can be collated to
  detect the location of the mines.

  The preceding examples motivate the following two models for the
  target locations.
  \begin{enumerate}
  \item \emph{Targets on grid}: where the target locations are on a
    uniform grid that is overlaid on $\Interval .$
  \item \emph{Random target}s: where the target locations are
    i.i.d. realizations of a random variable uniformly distributed over
    $\Interval.$
  \end{enumerate}
  Clearly, the sensing radius of the binary proximity sensors
  determines the quality of the separation that is achieved---a large
  sensing radius lowers the resolution while a small sensing radius
  requires a larger number of sensors. Thus the sensing radius is a
  design parameter to be chosen suitably.
}

Motivated by applications in cognitive radio, and in target sensing
situations like wildlife monitoring or land mine detection, we define
and develop the separability problem. An important requirement in
cognitive radio systems is the detection of white spaces---the regions
where the primary radio transmitters are not active. Consider the
following white space detection problem considered
in~\cite{Vaze:Whitespace:12}. In a region of interest, there are $n$
possible locations where these primary transmitters could be
present. It is reasonable to assume that each of these $n$ points may
contain at most one radio transmitter. To detect whitespace, i.e. the
area in where there is no radio reception, a set of radio receivers
are deployed randomly and each receiver can determine the existence of
a radio signal of strength above a specified threshold. The location
of the primary transmitters, and hence the available white space, is
to be determined using the binary output of the receivers.

As a second example, consider estimation of the population of rare
wildlife in a reserve forest. There are locations in these forests
that an animal is expected to visit e.g., watering hole or a salt
lick. If the animal is solitary, e.g., tigers or leopards, then at
most one of them will be present at any given time at any of these
locations. Sensors can be placed to sense the presence or absence of
an animal at these sites and the output from the sensors can be
used to estimate the population. Such a technique was employed to
estimate the tiger population in the Nagarahole reserve forest in
India~\cite{royleNKGA2009} where the forest was overlaid with an
approximate grid and sensors were suitably placed to sense the
presence of tigers in these sites. 

A third example is of land-mine detection. It is not unreasonable to
assume that, say, $n$ mines, have been randomly placed in an
area. Some of these are inert and others active. It is of interest to
detecting the location of the active mines using sensors that can
determine the presence of an active mine in their coverage range.

The preceding examples motivate the \emph{separability} problem, which
is defined next. A finite region of interest, say $\Interval,$ has $n$
points that are called \emph{target locations}. Each of these $n$
points contains at most one target. An ideal binary proximity sensor
of sensing radius $r(n)$ outputs a `1' if one or more targets are
present within its sensing radius $r(n)$ and outputs `0'
otherwise. $m(n)$ ideal binary proximity sensors are  randomly
deployed in $\Interval.$ The random location of the sensors models the
lack of precise control during sensor-deployment but the random
realization is assumed known. The objective is to find the
\emph{target configuration}---identify the set of target locations
that contain a target---using the outputs of these $m(n)$ sensors. We
determine order-optimal conditions on $r(n)$ and $m(n)$ to determine
the target configuration. This is a significant generalization of the
definition of separability described in~\cite{mudumbaiMI2008}. In this
paper we study several variations of the separability problem for the
two following models of target locations. 
\begin{enumerate}
\item \emph{Targets on grid}: where the target locations are on a
  uniform grid that is overlaid on $\Interval .$
\item \emph{Random target}s: where the target locations are
  i.i.d. realizations of a uniform random variable over $\Interval.$
\end{enumerate}
Clearly, the sensing radius of the binary proximity sensors determines
the quality of the separation that is achieved---a large sensing
radius lowers the resolution while a small sensing radius requires a
larger number of sensors. Thus the sensing radius is a design
parameter to be chosen suitably.

We are now ready to state the objective of this work---determine
$(r(n), m(n))$ the sensing radius of each sensor and the number of
sensors that are randomly deployed to achieve separability of the $n$
target locations. For each of the target location models, we seek to
find $r(n)$ and $m(n)$ for the following two performance criteria.
\begin{enumerate}
\item \textit{Full separability} where the configuration of all the
  $n$ target locations are to be identified correctly. Our results are
  asymptotic (in $n$) and have the form
  \begin{displaymath}
    \prob{ \mbox{all $2^n$ target configurations can be identified} }
    \to 1.
  \end{displaymath}
\item \textit{Partial separability} where the configuration of at
  least a fraction $\alpha, \ 0 < \alpha < 1,$ of the locations is to
  be determined correctly with probability at least $\beta, \ 0 <
  \beta < 1,$ i.e.
  \begin{displaymath}
    \prob{ \mbox{configuration at $\geq \ \alpha n$
        target locations are correctly identified} } \geq \beta.
  \end{displaymath}
\end{enumerate}

\subsection{Previous Work}
\label{sec:prior_art}
Localization of a source or a target is probably the closest class of
problems to separability. This is a very old problem and the
literature is replete with source and target localization using a
variety of measurement models. See~\cite{patwariAKHMCL2005} for an
excellent survey of localization problems in sensor networks. While a
large part of the localization literature considers measurement models
like range, angle-of-arrival, etc, binary proximity sensors have also
been used in several localization problems e.g.,
\cite{Simic01,Karnik04}. More recently, binary proximity sensors have
also been used in target tracking, e.g.,
\cite{singhMKSCT2007,shrivastavaMMST2009}.  Another problem closely
related to separability is the counting problem---count the number of
targets in a finite sensing area~\cite{gandhiKST2008}. In
\cite{gandhiKST2008}, the counting problem has been studied with
sensors that can output the number of distinct targets they can sense,
i.e., the output is not binary. We will see below that separability is
distinct from both of these.

The notion of separability was introduced in \cite{mudumbaiMI2008}
where the following problem was studied. A single target is located at
one of two possible locations, say $t_1$ and $t_2.$ Binary proximity
sensors, possibly non ideal, are deployed in $\Re^2$ according to a
spatially homogeneous Poisson process of density $\lambda.$ The
separability problem, identifying which of $t_1$ and $t_2$ contains
the target, was formulated as a binary hypothesis testing problem and
fundamental bounds on the decoding error was obtained using
information theoretic techniques. They also consider the case where
the sensor output is from an alphabet $\mathcal{Y}.$ The difference
between separability and localization is now apparent---separability
is a disambiguation problem while localization is an estimation
problem.  In~\cite{mudumbaiMI2008} it is assumed that the target is
present in exactly one of two possible locations; we generalize and
consider the case where upto one target can be present at each of $n$
locations. Thus our disambiguation is between the $2^n$ possibilities,
akin to decoding.

Much of our techniques and results will be closely related to results
in coverage problems. It may be noted that building on coverage
problems outlined in \cite{Hall88}, there has been a significant
amount of work on coverage in sensor networks,
e.g. see~\cite{Kumar04,Liu04}. The primary interest in this line of
research is to use random shapes (sensor coverage areas) and cover any
subset of $\Re^d$ or a measurable fraction of the subset. Infer that
the separability problem reduces to the coverage of a countable number
of points with extra restrictions, we will compare our results to
analogous results from coverage analysis.

\subsection{Organization of the Paper and Summary of Results}

The rest of the paper is organized as follows. The system model and
relevant mathematical results are described in
Section~\ref{sec:keyideas}. The main results, i.e., the scaling laws
for critical $r(n)$ and the corresponding $m(n)$ for the two target
models (Theorem~\ref{thm:separability_deterministic} and
\ref{thm:separability_random}) are described in
Section~\ref{sec:scalinglaws}. In Subsection~\ref{sec:fixed_targets}
we consider the targets-on-grid model and randomly realized target
locations are described in Subsection~\ref{sec:uniformtargets}. For
pedagogical convenience, Section~\ref{sec:scalinglaws} will deal with
separability on $\Interval=[0,1]$ and the two dimensional extension is
described in Theorem~\ref{thm:2d} in Section~\ref{sec:2d}.

For secure settings, it is also desirable to have some form of
robustness against adversarial sensors; an adversarial sensor can
mislead the decision process by injecting binary noise, that toggles
its actual reading. This form of adversarial sensing is discussed in
Section~\ref{sec:adversaries} where we assume that there is a known
upper bound on the fraction of sensors that are adversarial. We will
argue in Theorem~\ref{thm:adversary:grid} that majority logic can be
used and the order of $r(n)$ and $m(n)$ does not change. Finally,
conclusions are presented in Section~\ref{sec:conclusions}.

\section{System Model and Mathematical Preliminaries}
\label{sec:keyideas}

In this section, we describe the system model and relevant
notation. This is followed by some known mathematical results which
will be used in the subsequent sections.

\subsection{System model}
\label{sec:system_model}

The sensor field is a finite interval $\Interval ;$ without loss of
generality we assume that, $\Interval = [0, 1]$.  $\TargetLocationSet$
is the set of $n$ $(n < \infty)$ distinct points in $\Interval$ that
are the target locations. Two models for $\TargetLocationSet$ will be
used in this work. In the targets-on-grid model, the target
locations ($\TargetGrid$) are on a finite grid, i.e.,
\begin{displaymath}
  \TargetGrid := \left\{ \frac{1}{2n}, \frac{3}{2n}, \ldots, \frac{(2n 
      - 1)}{2n} \right\}. % \label{eq:gridtargetlocation} 
\end{displaymath}
In the random-targets model, the target locations ($\TargetRandom$) are
a realization of $n$ i.i.d.~random variables uniformly distributed in
$\Interval.$ They will be represented using the ordered target
locations as below.
\begin{displaymath}
  \TargetRandom := \left\{ T_{(1)}, T_{(2)}, \ldots, T_{(n)} \right\}
  % \label{eq:randomtargetlocation}
\end{displaymath}
Here $\{T_{(i)}, n \in \mathbb{N}\}$ is the $i$-th order statistic of
$n$ i.i.d.~$\mbox{Uniform}[0,1]$ random variables. We reiterate that
all target locations in $\TargetLocationSet$ need not be occupied by
targets.

Recall that, an ideal binary proximity sensor at location $x$ with
sensing radius $r(n)$ outputs a $1$ if and only if there exists at
least one target in $(x-r(n), x+ r(n)).$ The locations of the set of
$m(n)$ sensors is denoted by $\{X_1, X_2, \ldots, X_{m(n)}\},$ where
$X_i$ are i.i.d.~uniformly distributed in $\Interval.$ Throughout the
paper, we assume all sensors to be ideal binary proximity sensors. To
detect the possible presence of targets in $\TargetLocationSet,$
$m(n)$ sensors are randomly deployed in $\Interval.$ Each sensor has a
sensing radius of $r(n)$ i.e., for a sensor at location $x$, the
sensing region is
\begin{eqnarray*}
  \SensingRegion(x, r(n)) = \{ y : y \in \Interval \mbox{ and } |y -
  x| < r(n)\}.\label{eq:sensingregion}
\end{eqnarray*}
The sensing radius $r(n)$ will be treated as a \textit{design
  parameter}. 

The data recording model of the sensors is as follows.  A sensor at
$x$ outputs a logical $1$ if it detects at least one target in
$\SensingRegion(x, r(n))$. We will see from the following argument
that target location $T_i$ is unambiguously identifiable by a sensor
if and only if the sensor detects $T_i$ and no other $T_j, j \neq i.$
Since we assume that targets can be present only at the target
locations in $\TargetLocationSet,$ the following cases prove the above
claim.
\begin{enumerate}
\item For a sensor at $x,$ if $ T_i \notin \SensingRegion(x, r(n)) \
  \forall \ i \in \left\{1,\ldots,n\right\},$ then it outputs a
  logical `0' irrespective of the target configuration and the sensor
  observation is not useful. The sensor at $x_a$ in
  Fig.~\ref{fig:TargetsAndSensingII} illustrates this condition.
\item \label{unique:coverage} For a sensor at $x,$ and some $i$ and
  $j,$ $1 \leq i < j \leq n,$ let $T_i \in \SensingRegion(x, r(n))$
  and $T_j \in \SensingRegion(x, r(n)).$ If at least one of $T_i$ or
  $T_j$ has a target then the sensor at $x$ will output a
  `1'. However, this sensor's observation cannot be used to
  distinguish any configuration of $T_i$ and $T_j$ with at least one
  target. The sensor at $x_b$ and target locations $T_a$ and $T_b$ in
  Fig.~\ref{fig:TargetsAndSensingII} illustrate this condition.
\item Let three consecutive target locations $\{T_{i-1}, T_i, T_{i+1}
  \}$ be such that $\vert T_i - T_{i-1} \vert < r(n)$ and $\vert T_i -
  T_{i+1} \vert < r(n),$ then all sensors that cover $T_i$ also cover
  either $T_{i-1}$ or $T_{i+1}.$ If there is a target at both
  $T_{i-1}$ and $T_{i+1},$ then the presence or absence of a target at
  $T_i$ cannot be distinguished by any set of sensors.
\end{enumerate}
\begin{figure}
  \begin{center}
    \scalebox{1.0}{\input{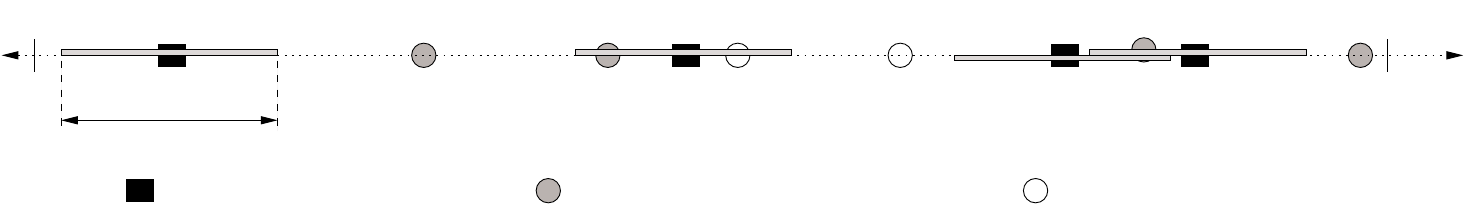_t}} 
    \caption{Illustrating conditions on identifiability of a target
      locations. Target location $T_c$ is identifiable. Target
      locations $T_a$ and $T_b$ are not identifiable.}
    \label{fig:TargetsAndSensingII}
  \end{center}
\end{figure}
Thus, a sensor at $x$ can be used to determine the target
configuration at $T_i$ if and only if $T_i \in \SensingRegion(x,
r(n))$ and $T_j \notin \SensingRegion(x, r(n)) \ \forall \ j \neq i.$
This leads us to the following definition. We say that target location
$T_i$ is \textit{identifiable} if there is at least one sensor at $x
\in \Interval$ such that $T_i \in \SensingRegion(x, r(n))$ and $T_j
\notin \SensingRegion(x, r(n)), \ \forall \ j \neq i.$ The target
location at $T_c$ in Fig.~\ref{fig:TargetsAndSensingII}, is covered by
sensors at $x_c$ and $x_d,$ and is identifiable. Thus full
separability is equivalent to having $n$ identifiable targets.

In this paper we seek two types of separability results. In
\textit{full separability}, the objective is to determine the
asymptotic $r(n)$ and $m(n)$ for which every possible target
configuration is separated with high probability. In other words, find
$r(n)$ and $m(n)$ that will, with high probability, identify every
target location. The second set of results determine the $r(n)$ and
$m(n)$ to achieve \textit{partial separability}, i.e., we determine
these quantities for which at least $\alpha n,$ $0 < \alpha < 1,$
target locations are identifiable with a probability at least $\beta,$
$0 < \beta < 1$.

To make this paper self-contained, we next present some mathematical
results, some of which are known in the literature, that will be used
in our analysis.

\subsection{Mathematical preliminaries}
\label{sec:math}
First, a note on symbols. The set of reals and naturals are denoted by
$\Re$ and $\mathbb{N}$ respectively. We have already used the symbol
$\mathsf{Pr}$ for probability of an event; it is assumed that there is
a common $(\Omega, {\mathcal F}, \mathsf{Pr})$ structure for defining
all the events in this work. \ignore{rephrase this suitably. what is
  the space or do we need a space, cant we create a product space from
  all events? -- expect some trouble if ever it is.. }

The order notation is well known but we recapitulate them here for
completeness. For positive sequences $f(n)$ and $h(n)$ we say that
$f(n) = \Theta(h(n))$ if there are non-zero positive constants $0 <
a_1 < a_2$ and a corresponding $N \in \mathbb{N}$ such that for all $n
\geq N,$ $a_1 h(n) \leq f(n) \leq a_2 h(n)$.  Similarly, we say that
$f(n) = \omega(h(n))$ if $\lim_{n \rightarrow \infty} f(n)/h(n) =
\infty$.

\ignore{The first result below establishes the closeness of
  $(1-\theta)^m$ to an exponential function of $\theta$ and $m$. The
  second result is on the order statistics of uniform random
  variables. Following this, we discuss some variants of the Markov
  inequality. Finally, the coupon collector
  problem~\cite{motwaniRR1995} is discussed.}

The following lemma bounds the asymptotic behavior of $(1-\theta)^m.$
\begin{lemma}
  For constant $\theta,$ $0 < \theta < 1$ and any positive integer
  $m,$
  \begin{eqnarray}
    \exponent{ - \frac{m \theta}{1 - \theta} } \ < \ (1 - \theta)^m \
    <  \ \exponent{ - m \theta }. \label{eq:braveinequality}
  \end{eqnarray}
  This implies that $(1 - \theta)^m \rightarrow 0$ if and only if
  $\exponent{-m\theta} \rightarrow 0.$
\end{lemma}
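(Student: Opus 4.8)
The plan is to take logarithms and reduce both halves of~\eqref{eq:braveinequality} to a single elementary two-sided bound on $\ln(1-\theta)$. Since $0 < \theta < 1$, the quantity $1-\theta$ lies in $(0,1)$, so $\ln(1-\theta)$ is well defined and negative, and both $t \mapsto mt$ (for the positive integer $m$) and $t \mapsto e^{t}$ are strictly increasing. Hence it suffices to establish
\[
  -\frac{\theta}{1-\theta} \;<\; \ln(1-\theta) \;<\; -\theta ,
\]
after which multiplying through by $m$ and exponentiating yields exactly~\eqref{eq:braveinequality}.

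For the right-hand inequality I would invoke the standard bound $\ln(1+u) < u$, valid for every $u \ne 0$ with $u > -1$ --- which follows either from strict concavity of the logarithm (it lies strictly below its tangent line at $1$) or from the expansion $\ln(1-\theta) = -\sum_{k\ge 1}\theta^{k}/k$, all of whose terms are positive --- applied with $u = -\theta$. For the left-hand inequality I would rewrite $-\ln(1-\theta) = \ln\frac{1}{1-\theta}$ and set $y := \frac{1}{1-\theta} > 1$; then $\frac{\theta}{1-\theta} = y-1$, so the claim becomes the same elementary inequality $\ln y < y-1$, valid for all $y > 0$ with $y \ne 1$. Combining the two gives the displayed two-sided bound.

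For the concluding equivalence, observe that $e^{-m\theta}\to 0$ iff $m\theta\to\infty$, and since $\theta$ (hence $1-\theta$) is a fixed positive constant this holds iff $\frac{m\theta}{1-\theta}\to\infty$, i.e.\ iff $e^{-m\theta/(1-\theta)}\to 0$. The upper bound in~\eqref{eq:braveinequality}, together with $(1-\theta)^{m}>0$, then shows $e^{-m\theta}\to 0 \Rightarrow (1-\theta)^{m}\to 0$, while the lower bound gives the converse. I do not anticipate any genuine obstacle; the only point requiring a little care is to quote the \emph{strict} forms of $\ln(1+u)<u$ and $\ln y<y-1$, which are legitimate here precisely because $\theta\ne 0$ and $y\ne 1$.
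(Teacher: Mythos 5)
Your proof is correct and is essentially the paper's argument in logarithmic dress: the upper bound is $\ln(1-\theta)<-\theta$ (the paper's $1-\theta<e^{-\theta}$), and your substitution $y=1/(1-\theta)$ with $\ln y<y-1$ is exactly the paper's use of $e^{x}>1+x$ at $x=\theta/(1-\theta)$. The concluding equivalence is handled the same way (and slightly more explicitly than in the paper), so there is nothing further to flag.
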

\begin{proof} For $0 < \theta < 1,$ $(1 - \theta) < \exp(-\theta),$
  and thus the upper bound follows. The lower bound is obtained from
  $\exp( x ) > 1 + x,$ for $x > 0,$ using $x = \frac{\theta}{1-
    \theta}.$
\end{proof}

The following results from order statistics are adapted from
\cite[pg.~134]{davidno2003}.  Let $\{U_i, 1 \leq i \leq n\}$ be
i.i.d.~$\mbox{Uniform}[0,1]$ random variables and $U_{(i)}, 1 \leq i
\leq n$ be their order statistics, i.e., $U_{(1)} \leq U_{(2)} \leq
\cdots \leq U_{(n)}.$ Let $U_{(0)} := 0.$ Define the spacing variables
as follows: $V_1 = U_{(1)}$, $V_i = U_{(i)} - U_{(i-1)}, \ 2 \leq i
\leq n$, and $V_{n+1} \ = \ 1 - U_{(n)} \ = \ 1 - \sum_{i = 1}^n V_i.$
The joint probability density function of $\{V_i, 1 \leq i \leq n\},$
\begin{equation*}
  f_{V_1, \ \ldots \  V_{n}} (v_1, \ldots, v_{n}) = 
  \begin{cases}
    n! & \mbox{for $v_i \geq 0$ and $\sum_{i = 1}^n v_i \leq 1,$} \\
    0 & \mbox{otherwise.}
  \end{cases}
  % \label{eq:pdf_levels}
\end{equation*}
Therefore,
\begin{equation}
  \prob{V_1 > v_1, \ldots, V_{n} > v_{n} } = 
  \begin{cases}
    (1 - v_1 - \ldots - v_{n})^{n}, & \mbox{if $\sum_{i = 1}^n v_i
      \leq 1,$} \\
    0 & \mbox{otherwise.}
  \end{cases}
  \label{eq:spacingdistribution}
\end{equation}
Since the probability density function $f_{V_1, \ \ldots \ V_{n}}
(v_1, \ldots, v_{n})$ is symmetric, the distribution of any $k$
spacings, $1 \leq k \leq n,$ has the same distribution as that of the
first $k$ spacings, i.e., of $V_1,\ldots, V_k.$ This is obtained by
setting the $v_i = 0$ for the other $(n-k)$ spacings
in~\eqref{eq:spacingdistribution}. Thus, for any $k < n$ and $1 \leq
n_1 < n_2 < \ldots n_k \leq n$,
\begin{equation}
  \prob{V_{n_1} > v_1, \ldots, V_{n_k} > v_{k}} = 
 \begin{cases}
   (1 - v_1 - \ldots - v_{k})^{n} & \mbox{if $v_i > 0$ and 
     $\sum_{i =1}^k v_i \leq 1,$} \\
   0 & \mbox{otherwise.}
 \end{cases}
 \label{eq:kmarginals}
\end{equation}

Next, we derive a version of the Markov inequality on the sum of
Bernoulli random variables. Let $\{B_n \},$ $n > 0,$ be a sequence of
i.i.d. Bernoulli random variables with parameter $p$ and let $S_n :=
\sum_{i= 1}^n B_i.$ From Markov inequality on $n-S_n,$ we have
\begin{eqnarray}
  \prob{(n - S_n) \geq (1-\alpha) n} & \leq & 
  \frac{(1-p)n}{(1-\alpha)n} \nonumber \\
  \prob{S_n \leq n \alpha} & \leq &  \frac{1-p}{1-\alpha} 
  \nonumber \\
  \prob{S_n > n \alpha} & = & 1 - \prob{S_n \leq \alpha n} \ \geq \
  \frac{p-\alpha}{1-\alpha} .
  \label{eq:reversemarkov}   
\end{eqnarray}
Observe that this bound is uniform in $n.$

We now summarize some results for the coupon collector
problem~\cite{motwaniRR1995}. Recall that in the coupon collector
problem, there are $n$ distinct coupons in a bag and coupons are
sampled with replacement.  The quantity of interest is the minimum
number of samples so that each coupon is sampled at least once. Let
$\CouponErrorEvent_{i}$ indicate that coupon $i$ has not been sampled
in $m$ draws. The following equality relations asymptotically
hold~\cite{motwaniRR1995}.
\begin{enumerate}
\item For any constant $c > 0,$ if $m = n (\log n - c)$ then $\lim_{n
    \rightarrow \infty} \prob{\sum_{i=1}^n \CouponErrorEvent_i \geq 1}
  = 1 - \exp(-\exp(c)).$ If instead of $c,$ we use any $c_n \to
  \infty,$ then $\lim_{n \to \infty} \exp(-\exp(c_n))= 0$ and
  $\prob{\sum_{i=1}^n \CouponErrorEvent_i \geq 1 } \to 1.$
\item If $m = n (\log n + c),$ then $\lim_{n \rightarrow \infty}
  \prob{\sum_{i=1}^n \CouponErrorEvent_i \geq 1} = 1 -
  \exp(-\exp(-c)).$ The following hold.
  \begin{enumerate}
  \item For any real positive constant $c,$ $\lim_{n \rightarrow
      \infty} \prob{\sum_{i=1}^n \CouponErrorEvent_i \geq 1} < 1.$
  \item If instead of $c,$ we use any $c_n \to \infty,$ then $\lim_{n
      \to \infty} \exp(-\exp(-c_n))= \ignore{\exp(-0) =}1.$ Thus
    $\prob{\sum_{i=1}^n \CouponErrorEvent_i \geq 1} \to 0.$
  \end{enumerate}
\end{enumerate}
Thus, if $m$ is the number of samples needed to sample all of the $n$
coupons, then $m \geq n (\log n + c_n)$ for any $c_n \to \infty.$
Observe that to ensure that every coupon has been drawn $m/n$ is
logarithmic.

The main results are presented in the next section.

\section{Scaling laws for separability}
\label{sec:scalinglaws}
We first consider the targets-on-grid model and then consider the
random targets model.
\subsection{Separability of target locations on a grid}
\label{sec:fixed_targets}
For notational convenience, the targets in $\TargetGrid$ will be
numbered $1, 2, \ldots, n$ from the left. Recall that in
$\TargetGrid,$ the $i$-th target location $T_i = (2i -1)/(2n), 1 \leq
i \leq n.$ Theorem~\ref{thm:separability_deterministic} presents the
results of this subsection.
\begin{theorem}[Separability of targets-on-grid]
  \label{thm:separability_deterministic}
  For the sensing region $\Interval = [0,1],$ and target locations
  $\TargetGrid,$ when $m(n)$ sensors are deployed uniformly in
  $\Interval,$
  \begin{enumerate}
  \item $0 < r(n) < (1/n)$ is necessary for
    separability. \label{fixed:sep:rn}
  \item Let $r(n) = a/2n,$ or $r(n) =(2-a)/2n,$ for $0 < a \leq 1,$
    then the following are true.~\label{fixed:full:mn}
    \begin{enumerate}
    \item If $m(n) \geq (n/a) \ (\log (n/a) + c_n),$ for any $c_n \to
      \infty,$ then \newline $\prob{\mbox{all target configurations
          are separable} } \to 1.$ \ignore{\item If $m(n) \leq (n/a) \
        (\log n + c),$ for any real $c > 0,$ then $\prob{\mbox{all
            target configurations are separable}} < 1.$}
    \item If $m(n) \leq (n/a) \ (\log (n/a) - c_n),$ for any $c_n \to
      \infty,$ then \newline $\prob{\mbox{all target configurations
          are separable} } \to 0.$ \label{fixed:full:mn:nec}
    \end{enumerate}
  \item Let $r(n) = a/2n,$ or $r(n) = (2-a)/2n,$ for $0 < a \leq 1.$
    Given $0 < \alpha < 1,$ and $0 < \beta < 1,$ the following are
    true.\label{fixed:partialmn}
    \begin{enumerate}
    \item If $m(n) \geq (n/a) \ \log \left( \frac{1}{(1 - \alpha) (1 -
          \beta)} \right) $ then $\prob{\mbox{at least } \alpha n
        \mbox{ targets are separable}} >
      \beta.$ \label{fixedpartialsuff}
    \item If $m(n) < (n/a - 1) \ \log \left( \frac{1}{(1 - \alpha
          \beta)} \right) $ then $\prob{\mbox{at least } \alpha n
        \mbox{ targets are separable}} <
      \beta.$\label{fixed:partial:nec}
    \end{enumerate}
  \end{enumerate}
\end{theorem}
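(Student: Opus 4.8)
The plan is to reduce separability to a random interval–covering problem and then apply the elementary estimates in Section~\ref{sec:math}. For a radius $r(n)$, call $I_i := \{x \in \Interval : T_i \in \SensingRegion(x,r(n))\text{ and }T_j \notin \SensingRegion(x,r(n))\ \forall j \neq i\}$ the \emph{identifiability interval} of $T_i$; by the discussion preceding the theorem, $T_i$ is identifiable exactly when one of the $m(n)$ sensors lies in $I_i$, so full separability is the event that every $I_i$ is hit, and partial separability at level $(\alpha,\beta)$ concerns the event that at least $\alpha n$ of them are hit. Using $T_i = (2i-1)/(2n)$ and the fact that consecutive targets are $1/n$ apart, one checks three geometric facts. (a) If $r(n) \geq 1/n$, then for every interior $T_i$ and every $x \neq T_i$ with $T_i \in \SensingRegion(x,r(n))$ one also has $T_{i-1} \in \SensingRegion(x,r(n))$ or $T_{i+1} \in \SensingRegion(x,r(n))$; hence $I_i$ has Lebesgue measure $0$, no sensor falls in it almost surely, none of the $n-2$ interior targets is identifiable, and therefore $\prob{\text{full separability}} = 0$ and, for large $n$, $\prob{\ge \alpha n \text{ separable}} = 0$. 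This is part~\ref{fixed:sep:rn}. (b) If $r(n) = a/(2n)$ with $0 < a \le 1$, the $I_i$ are pairwise disjoint open intervals of length exactly $a/n$ each (at the two ends the missing neighbour is offset exactly by the boundary of $\Interval$). (c) If $r(n) = (2-a)/(2n)$, the $n-2$ interior $I_i$ again have length $a/n$ while the two extreme ones have length $(1+a)/(2n) \ge a/n$. I will carry out the remaining parts for $r(n) = a/(2n)$; the other radius differs only in the two extreme intervals being slightly larger, which can only help and affects only $O(1/n)$ terms in part~\ref{fixed:partialmn}.

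Fix $r(n)=a/(2n)$, let $N_i$ be the number of sensors in $I_i$, and put $Z := \sum_{i=1}^n \indicator\{N_i = 0\}$, the number of unhit intervals; since each sensor is uniform on $\Interval$, $\prob{N_i = 0} = (1-a/n)^{m(n)}$, $\EXP{Z} = n(1-a/n)^{m(n)}$, and because the $I_i$ are disjoint the events $\{N_i=0\}$ are pairwise negatively correlated: $\prob{N_i=0,N_j=0} = (1-2a/n)^{m(n)} \le \prob{N_i=0}\prob{N_j=0}$ for $i \neq j$. For part~\ref{fixed:full:mn}(a), a union bound and Lemma~1 give $\prob{Z \ge 1} \le \EXP{Z} < n\,\exponent{-m(n)a/n} = a\,\exponent{-c_n} \to 0$ when $m(n) \ge (n/a)(\log(n/a)+c_n)$. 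For part~\ref{fixed:full:mn:nec}, the lower bound of Lemma~1 gives $(1-a/n)^{m(n)} > \exponent{-\frac{m(n)a/n}{1-a/n}}$, and since $m(n) \le (n/a)(\log(n/a)-c_n)$ forces $\frac{m(n)a/n}{1-a/n} \le \log(n/a)-c_n+o(1)$ we get $\EXP{Z} \ge a\,\exponent{c_n}(1-o(1)) \to \infty$; to convert this into $\prob{Z=0}\to 0$ I use Chebyshev, noting that negative correlation yields $\mathrm{Var}(Z) \le \EXP{Z}$, hence $\prob{Z=0} \le \mathrm{Var}(Z)/\EXP{Z}^2 \le 1/\EXP{Z} \to 0$. (Conceptually this is the coupon-collector problem: the number of sensors landing in $\bigcup_i I_i$ is $\mathrm{Binomial}(m(n),a)$, concentrated near $m(n)a$, and, conditioned on it, those sensors are i.i.d.\ uniform over the $n$ intervals, so the estimates recalled in Section~\ref{sec:math} reproduce both thresholds, which is why the answer is $(n/a)\log(n/a)$.)

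For part~\ref{fixed:partialmn} keep $Z$ and set $S := n-Z$, so $\{\text{at least }\alpha n\text{ targets separable}\} = \{S \ge \alpha n\} = \{Z \le (1-\alpha)n\}$. Sufficiency (part~\ref{fixedpartialsuff}): $\EXP{Z} = n(1-a/n)^{m(n)} < n\,\exponent{-m(n)a/n} \le n(1-\alpha)(1-\beta)$ as soon as $m(n) \ge (n/a)\log\!\big(\tfrac{1}{(1-\alpha)(1-\beta)}\big)$, so Markov's inequality gives $\prob{Z \ge (1-\alpha)n} \le \EXP{Z}/((1-\alpha)n) < 1-\beta$, i.e.\ $\prob{S \ge \alpha n} > \beta$. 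Necessity (part~\ref{fixed:partial:nec}): apply Markov to $S$ instead. Here $\EXP{S} = n\big(1-(1-a/n)^{m(n)}\big)$, and the lower bound of Lemma~1 gives $(1-a/n)^{m(n)} > \exponent{-\frac{m(n)a}{n-a}} > 1-\alpha\beta$ whenever $m(n) < (n/a-1)\log\!\big(\tfrac{1}{1-\alpha\beta}\big) = \tfrac{n-a}{a}\log\!\big(\tfrac{1}{1-\alpha\beta}\big)$, so $\EXP{S} < n\alpha\beta$ and $\prob{S \ge \alpha n} \le \EXP{S}/(\alpha n) < \beta$.

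The only step that is not a one-line application of a union or Markov bound is the lower bound of part~\ref{fixed:full:mn:nec}: that some interval is empty with probability tending to $1$ is an anti-concentration statement, and the clean route is the second-moment argument above, whose crux is the disjointness-driven negative correlation of the events $\{N_i=0\}$. The mismatch between the constants $n/a$ (sufficiency) and $n/a-1$ (necessity) in part~\ref{fixed:partialmn} is exactly the slack between the two sides of Lemma~1: sufficiency uses $(1-\theta)^m < e^{-m\theta}$, whereas necessity must settle for $(1-\theta)^m > e^{-m\theta/(1-\theta)}$.
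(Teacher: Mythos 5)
Your proposal is correct and follows essentially the same route as the paper: you reduce separability to hitting $n$ disjoint identifiability intervals each of probability mass $a/n$ (including the same observation that the radius $(2-a)/2n$ yields intervals of the same length $a/n$), use a union/first-moment bound for the achievability in part~\ref{fixed:full:mn}, and use Markov together with the reverse-Markov bound~\eqref{eq:reversemarkov} for part~\ref{fixed:partialmn}, arriving at the identical constants. The one genuine divergence is the converse in part~\ref{fixed:full:mn:nec}: the paper simply invokes the coupon-collector limit law quoted in Section~\ref{sec:math}, whereas you prove the anti-concentration directly via $\EXP{Z}\to\infty$ plus Chebyshev, using the negative correlation $(1-2a/n)^m\le(1-a/n)^{2m}$ of the empty-interval events to get $\mathrm{Var}(Z)\le\EXP{Z}$. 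This substitution is sound and in fact slightly more self-contained, since for $a<1$ each sensor misses all $n$ intervals with probability $1-a$, so the quoted coupon-collector statement does not literally apply and needs exactly the kind of adaptation your second-moment computation supplies; both you and the paper treat the two boundary intervals for $r(n)=(2-a)/2n$ only up to an $O(1/n)$ correction, but you at least flag this explicitly.
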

\begin{proof}
  We first prove statement~\ref{full:sep:rn}.
  \begin{itemize}
  \item If $r(n) > 1/n$, then all sensors have at least two target
    locations in their sensing region. From our discussion in
    Section~\ref{sec:system_model}\ignore{(pg.~\pageref{unique:coverage})},
    it follows that no target is identifiable.
  \item If $r(n) = 1/n,$ then only sensors placed at $T_i$ sense
    exactly one target while any sensor at other locations senses two
    target locations. In a random sensor deployment, having sensors at
    the target locations $\TargetGrid$ has zero probability, thus
    targets are not separable with probability $1.$ \ignore{Clearly,
      having sensors with $r(n) = 1/n$ that can identify any target
      $T_i$ is a zero probability event.}
  \end{itemize}
  This proves statement~\ref{full:sep:rn} that $0 < r(n) < 1/n$ is
  necessary for separability. Next we prove
  statement~\ref{full:sep:mn}.

  First let $r(n) = a/2n,$ with $0 < a \leq 1.$ From the uniform
  distribution of the sensors, a sensor covers target location $i$
  with probability $a/n.$ For $r(n) \leq 1/2n,$ if a target location
  is covered, then it is identifiable. Thus with the $n$ target
  locations as coupons and the $m(n)$ sensors as draws, this is
  analogous to the coupon collector problem. For full separability we
  need all the target locations to be covered by at least one sensor;
  hence statement~\ref{full:sep:mn} follows.

  Next consider $r(n) = (2-a)/2n$ with $0 < a \leq 1.$ For each $T_i,$
  any sensor in the interval $\Interval_i := (T_i - (a/2n), T_i +
  (a/2n))$ covers only $T_i$ while a sensor elsewhere that covers
  $T_i$ will also cover $T_{i-1}$ or $T_{i+1}.$ From our discussion in
  Section~\ref{sec:system_model}\ignore{(pg.~\pageref{unique:coverage})},
  $T_i$ is identifiable if and only there is at least one sensor in
  $\Interval_i.$ The probability that a uniformly deployed sensor node
  falls in $\Interval_i$ is $a/n,$ which is the same as that for $r(n)
  = a/2n$ for $0 < a \leq 1.$ The rest of the proof for
  $r(n)=(2-a)/2n$ follows analogous to the case of $r(n) = a/2n.$ This
  completes the proof of statement~\ref{full:sep:mn}.

  Before we prove statement~\ref{fixed:partialmn}, we first derive
  upper and lower bounds on the probability of having at least $\alpha
  n$ identifiable target locations. First, let $r(n) = a/2n, 0 < a
  \leq 1.$ Recall that for this $r(n),$ a target is identifiable if
  and only if it is covered. Let $\ErrorEvent_i$ be the indicator of
  the event that $T_i$ is not covered. For partial separability, we
  require $\prob{ \sum_{i = 1}^n (1-\ErrorEvent_i) \ \geq \ \alpha n }
  \geq \beta.$ Using \eqref{eq:reversemarkov} and the Markov
  inequality on $(1-\ErrorEvent_i),$ we have the following lower and
  upper bounds respectively. For notational convenience, we will use
  $m$ instead of $m(n)$ for the rest of this proof.
  \begin{equation}
    \frac{1 - \left(1 - (a/n)\right)^m - \alpha}{1 - \alpha} \leq
    \prob{ \sum_{i = 1}^n (1-\ErrorEvent_i) \geq \alpha n } \leq \frac{1 -
      \left(1 - (a/n)\right)^m}{\alpha}. \label{eq:temp1}
  \end{equation}
  Now, we prove statement~\ref{fixedpartialsuff}. Let $m(n)$ be chosen
  such that $m \geq \left(\frac{n}{a}\right) \log \left( \frac{1}{(1 -
      \alpha)(1- \beta)} \right).$ By appropriate manipulations
  and~\eqref{eq:braveinequality}, the lower bound in~\eqref{eq:temp1}
  is $> \beta$ as shown below.
  \begin{displaymath}
    \left(1 - (a/n)\right)^m \ < \ \exponent{-\frac{am}{n}} \leq (1 -
    \alpha)(1- \beta) \ \iff \ 1 - \frac{\left(1 - (a/n)\right)^m}{1 -
      \alpha} > \beta.
  \end{displaymath}
  This completes the proof of statement~\ref{fixedpartialsuff}. Next,
  we prove statement~\ref{fixed:partial:nec}
  using~\eqref{eq:temp1}. Let $m(n)$ be chosen such that $m <
  \left(\frac{n}{a} - 1 \right) \log \left( \frac{1}{1-\alpha
      \beta}\right).$ By algebraic manipulations and
  using~\eqref{eq:braveinequality}, we see that the upper bound
  from~\eqref{eq:temp1} is $< \beta$ as shown below.
  \begin{displaymath}
    \left(1 - (a/n)\right)^m \ > \ \exponent{-\frac{m}{\frac{n}{a}-1}} > 1
    -\alpha \beta \ \iff \ \frac{1 - \left(1 - (a/n)\right)^m}{\alpha} <
    \beta.
  \end{displaymath}
  This completes the proof of statement~\ref{fixed:partialmn} for
  $r(n) = a/2n,$ with $0 < a \leq 1.$ If $r(n) = (2-a)/2n,$ for $0 < a
  \leq 1,$ then by an argument identical to the proof of
  statement~\ref{fixed:full:mn}, the conditions on $m(n)$ are
  identical to that with $r(n) = a/2n.$
\end{proof}
\begin{remark}
  If $r(n) = 1/2n$ then full coverage of $\Interval,$ as defined
  in~\cite{Hall88}, is a sufficient condition for full
  separability. Note that in the coverage analysis in~\cite{Hall88},
  sensors are distributed according to a homogeneous spatial Poisson
  process of intensity $\lambda(n).$ From~\cite[(2.24) and Thm
  3.11]{Hall88}, $\lambda(n) = c n \log n$ with $c > 1$ is necessary
  and sufficient for full coverage of $\Interval.$ Observe that the
  constant factor multiplying the $n \log n$ term for full coverage is
  $c > 1$ while for full separability it is $c = 1.$
\end{remark}
\begin{remark}
  If sensors have sensing radius $r(n) = 1/2n,$ then using the Markov
  inequality and~\cite[(3.11)]{Hall88}, we can show that to cover at
  least $\alpha, \ 0 < \alpha < 1,$ length of $\Interval$ with
  probability at least $\beta,$ the necessary and sufficient
  conditions on $\lambda(n)$ are identical to those of $m(n)$ obtained
  in statement~\ref{fixed:partialmn} of
  Theorem~\ref{thm:separability_deterministic}. Thus partial coverage
  and partial separability have identical requirements on the sensor
  density.
\end{remark}
\begin{remark}
  The sensing radius $r(n) = 1/(n+1),$ does satisfy
  statement~\ref{fixed:sep:rn} of
  Theorem~\ref{thm:separability_deterministic}, but in that case, the
  $m(n)$ required will be such that $m(n) \in \Theta \left(n^2 \log n
  \right)$ for full separability in fixed grid model.
\end{remark}

\subsection{Separability of uniformly distributed target-locations}
\label{sec:uniformtargets}

In this subsection, the target locations $\TargetRandom$ are
distributed uniformly in $\Interval.$ \ignore{Consider a realization
  of the target locations and number them from the left.} For
notational convenience in this subsection we use $T_i$ instead of
$T_{(i)}.$ In a realization, target location $T_i$ may not be
separable due to either of the following reasons.
\begin{enumerate}
\item $(T_i - T_{i-1})$ and $(T_{i+1} - T_i)$ are both less than
  $r(n),$ and no sensor can identify $T_i.$
\item There are no sensors uniquely covering $T_i.$
\end{enumerate}
As in the previous subsection, we seek $r(n)$ and $m(n)$ to achieve
full and partial separability and will account for both these failure
conditions. Theorem~\ref{thm:separability_random} is the main result
of this subsection.
\begin{theorem}[Separability of random target locations]
  \label{thm:separability_random}
  For the sensing region $\Interval = [0,1]$ and target locations
  $\TargetRandom,$ $m(n)$ sensors are deployed uniformly i.i.d.~in
  $\Interval.$
  \begin{enumerate}
  \item \ignore{\selfnote{Here Prof. BKD feels that we could write
        this result as a pro and a converse result, in two parts. One
        to say that if $r(n) = 1/c n^2,$ then $\prob{\min W_i \geq 2
          r(n)} <1,$ and the other when $r(n) = 1/ c_n n^2$ for some
        $c_n \to \infty.$}} For full separability of $n$ target
    locations, it is necessary that $r(n) = 1/(c_n n^2)$ for some $c_n
    \to \infty.$~\label{full:sep:rn}
  \item Let $r(n) = 1/(c_n n^2),$ for some $c_n \to \infty.$ Then the
    following are true.~\label{full:sep:mn}
    \begin{enumerate}
    \item For any $f_n \to \infty$ if $m(n) \geq \frac{1}{2r(n)}
      \left( \log \left( \frac{1}{2r(n)}\right) + f_n \right) = \left(
        \frac{n^2 c_n}{2} \right) (2 \log n + \log \left(c_n/2 \right)
      + f_n ),$ then $\prob{\mbox{all } n \mbox{ target locations are
          separable}} \to 1.$
    \item For any $f_n \to \infty$ if $m(n) \leq \frac{1}{2r(n)}
      \left( \log \left( \frac{1}{2r(n)}\right) - f_n \right) = \left(
        \frac{n^2 c_n}{2} \right) (2 \log n + \log \left(c_n/2 \right)
      - f_n),$ then $\prob{\mbox{all } n \mbox{ target locations are
          separable}} \to 0.$
    \end{enumerate}
  \item \ignore{\selfnote{Prof. BKD feels that this statement is not
        in sync with thm 1 and the other statements. This is a result
        similar to $r_n$ for full separability, but it is a type of
        lemma used in the proof of next step. So should this be
        included here this way or can we write this as a part of the
        next statement?} } For any $0 < \alpha_1, \beta < 1,$ let $c_1
    := \log \left( 1/ \left( 1 - \left(1 - \alpha_1\right)\left(1 -
          \beta\right) \right) \right).$~\label{partial:sep:rn}
    \begin{enumerate}
    \item If $r(n) \leq \frac{1}{2 \left(\frac{n}{c_1} + 1 \right)},$
      then \newline $\prob{\mbox{at least } \alpha_1 n \mbox{ targets
          are more than $r(n)$ away from adjacent neighbors}} \geq
      \beta.$ ~\label{partial:sep:rn:suff}
    \item If $r(n) > \frac{ \log \left(\frac{1}{\alpha_1 \beta}
        \right)}{2n},$ then \newline $\prob{\mbox{at least } \alpha_1
        n \mbox{ targets are more than $r(n)$ away from adjacent
          neighbors}} < \beta.$~\label{partial:sep:rn:nec}
    \end{enumerate}
  \item For a given $0 < \alpha, \beta < 1,$ choose an $\alpha_1$ such
    that $\alpha < \alpha_1 < 1.$ Let $c_2 := \log $ $\left( 1/
      \left( 1 - \left(1 - \alpha\right)\left(1 - \beta\right) \right)
    \right),$ $c_3 := \log \left( 1/(\alpha \beta) \right)$ and $c_1$
    is as defined in statement~\ref{partial:sep:rn} above. Let
    $\theta_1 \left(c_1/ \left(2 n \right) \right) \leq r(n) \leq
    \theta_2 \left(c_1/ \left(2 n \right) \right),$ for any $\theta_1,
    \theta_2$ such that $0 < \theta_1 \leq \theta_2 < 1/ \left(1 +
      \left( c_1/ n\right) \right).$ Choose a finite positive constant
    $a$ such that $a > \max \left\{1, c_2/ (2 \theta_1 c_1)
    \right\}.$~\label{partial:sep:mn}
    \ignore{Prof. BKD feels that this way of writing $r(n)$ gives it
      much more flexibility in terms of the function
      definition. Earlier I wrote $r(n) = c_1/ 4n,$ in that type of
      statement, it just implies it has to be a function that is going
      down by a constant times $1/n$ in this form, it says, it can be
      any arbitrary way as long as it is between two functions that go
      down in a inverse linear fashion. }
    \begin{enumerate}
    \item If $m(n) \geq \left( \frac{n}{\theta_1 (a- 1)c_1} \right)
      \log\left(1 + \frac{1}{c_2 - 2 a \theta_2 c_1} \right),$ then
      \newline $\prob{\geq \alpha n \mbox{ target locations are
          separable}} \geq \beta.$~\label{partial:sep:mn:suff}
    \item If $m(n) < \left( \frac{n}{\theta_2 (a-1)c_1} - 1 \right)
      \log \left( \frac{1}{c_3 - a \theta_1 c_1} \right),$ then
      \newline $\prob{\geq \alpha n \mbox{ target locations are
          separable}} < \beta.$~\label{partial:sep:mn:nec}
    \end{enumerate}
  \end{enumerate}
\end{theorem}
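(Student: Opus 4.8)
The plan is to treat the two distinct ways a target location $T_i$ can fail to be separable, both isolated already in Section~\ref{sec:system_model}: a \emph{geometric} obstruction, in which no sensor placement can uniquely cover $T_i$ --- writing $V_i,V_{i+1}$ for the two spacings adjacent to $T_i$, this happens exactly when the window of admissible sensor positions $\bigl(\max(T_i-r(n),T_{i-1}+r(n)),\,\min(T_i+r(n),T_{i+1}-r(n))\bigr)$ is empty, i.e.\ essentially when $V_i+V_{i+1}\le 2r(n)$ --- and a \emph{coverage} obstruction, in which that window is nonempty but holds none of the $m(n)$ sensors. Full separability asks that no target suffers either obstruction; partial separability that at most $(1-\alpha)n$ do. The four tools I would use throughout are the joint law of $k$ spacings \eqref{eq:kmarginals}, the exponential sandwich \eqref{eq:braveinequality}, the reverse Markov bound \eqref{eq:reversemarkov}, and the coupon-collector asymptotics of Section~\ref{sec:math}.

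For statement~\ref{full:sep:rn} the controlling quantity is the minimum spacing: the spacing law \eqref{eq:spacingdistribution} gives $\prob{\min_j V_j>t}=(1-(n+1)t)^{n}$, which tends to $1$ for $t=o(1/n^2)$ but stays bounded away from $1$ when $t$ is of order $1/n^2$. Hence $\prob{\min_j V_j>2r(n)}\to 1$ precisely when $r(n)=1/(c_n n^2)$ for some $c_n\to\infty$, and only in that regime is every admissible window, with high probability, of the full length $2r(n)$ on which the matching estimates of statement~\ref{full:sep:mn} rely --- giving the necessity. For statement~\ref{full:sep:mn}, on the (now high-probability) event that every $V_j>2r(n)$ the $n$ admissible windows are pairwise disjoint and each has length exactly $2r(n)$, so separability reduces \emph{exactly} to a coupon-collector instance. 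For achievability it suffices to hit every cell of a partition of $\Interval$ into $\lceil 1/(2r(n))\rceil$ cells of length $2r(n)$ --- the coupon collector with $1/(2r(n))$ coupons, which succeeds once $m(n)\ge\tfrac1{2r(n)}\bigl(\log\tfrac1{2r(n)}+f_n\bigr)$; for the converse I would invoke the inclusion--exclusion lower bound on $\prob{\text{some window is missed}}$ underlying the coupon-collector limit in Section~\ref{sec:math}, which forces this probability to $1$ once $m(n)\le\tfrac1{2r(n)}\bigl(\log\tfrac1{2r(n)}-f_n\bigr)$.

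For statement~\ref{partial:sep:rn} let $G$ be the number of targets both of whose adjacent spacings exceed $r(n)$. By \eqref{eq:kmarginals}, $\prob{V_i>r(n),V_{i+1}>r(n)}=(1-2r(n))^{n}$, so $\EXP{G}=n(1-2r(n))^{n}$ up to boundary terms. For statement~\ref{partial:sep:rn:suff} I would apply the reverse Markov inequality \eqref{eq:reversemarkov} to $n-G$: when $r(n)\le\tfrac1{2(n/c_1+1)}$, \eqref{eq:braveinequality} gives $(1-2r(n))^{n}\ge(1+c_1/n)^{-n}>e^{-c_1}=1-(1-\alpha_1)(1-\beta)$, and feeding this into the Markov estimate of $\prob{n-G\ge(1-\alpha_1)n}$ yields $\prob{G\ge\alpha_1 n}\ge\beta$. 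For statement~\ref{partial:sep:rn:nec} I would instead apply ordinary Markov to $G$, using $\EXP{G}\le n(1-2r(n))^{n}\le n\,\exponent{-2nr(n)}$ from \eqref{eq:braveinequality}: the hypothesis $r(n)>\log(1/(\alpha_1\beta))/(2n)$ makes $\exponent{-2nr(n)}<\alpha_1\beta$, so $\prob{G\ge\alpha_1 n}\le\EXP{G}/(\alpha_1 n)<\beta$.

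Statement~\ref{partial:sep:mn} is where I expect the real work, since it has to combine both obstructions, and it is a two-stage argument. First, with the sensing radius inflated to $a\,r(n)$, a version of statement~\ref{partial:sep:rn} --- applied with the parameters chosen so that the guaranteed fraction is $\alpha_1$ and the confidence is calibrated so that it will multiply correctly against the coverage stage (this is where the constant $c_2$ enters) --- guarantees that, with high enough probability, at least $\alpha_1 n$ targets have both adjacent spacings above $a\,r(n)$, each such target then having an admissible window of length at least $2r(n)(a-1)\ge\theta_1(a-1)c_1/n$ by the hypotheses $r(n)\ge\theta_1 c_1/(2n)$ and $a>1$. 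Second, since the sensor locations are independent of the target locations, run a coverage count on just these $\ge\alpha_1 n$ windows, permitting the loss of $(\alpha_1-\alpha)n$ of them: the number of these windows that are missed has mean at most $\alpha_1 n\,(1-2r(n)(a-1))^{m(n)}$, and a Markov/reverse-Markov estimate together with \eqref{eq:braveinequality} keeps this number below $(\alpha_1-\alpha)n$ with the required probability once $m(n)$ exceeds the stated threshold; chaining the two stages (the second uniform over the first) yields statement~\ref{partial:sep:mn:suff}. For the converse~\ref{partial:sep:mn:nec} I would go the other way with a first-moment bound, $\EXP{\#\{\text{separable targets}\}}=\sum_i\EXP{\indicator\{L_i>0\}\,(1-(1-L_i)^{m(n)})}$ with window length $L_i\le 2r(n)$, bounding the summand via \eqref{eq:braveinequality} and $\EXP{\min(V_i,2r(n))}\le 2r(n)$ and then invoking Markov. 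The delicate points, and where I expect most of the effort to go, are: (i) that for distinct windows the coverage indicators are negatively correlated rather than independent, so the achievability argument must be phrased through the \emph{sum of the non-coverage indicators}, for which only first moments are needed, rather than as a literal i.i.d.\ coupon collector; and (ii) the bookkeeping of the free parameters $\alpha_1,\theta_1,\theta_2,a$ so that the geometric-stage and coverage-stage probabilities multiply to at least $\beta$ and the resulting thresholds collapse to the stated closed forms in $c_1,c_2,c_3$.
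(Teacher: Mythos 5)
Your overall architecture---spacings, coupon collector, and the reverse Markov bound \eqref{eq:reversemarkov}---is the same as the paper's, and your treatment of statement~\ref{partial:sep:rn} is essentially the paper's own proof (the indicator $Z_i=\indicator(V_i>r(n),\,V_{i+1}>r(n))$ with $\EXP{Z_i}=(1-2r(n))^n$ from \eqref{eq:kmarginals}, reverse Markov for the sufficiency, plain Markov for the converse). The genuine gap is in your necessity argument for statement~\ref{full:sep:rn}. In your opening paragraph you correctly identify the geometric obstruction as $V_i+V_{i+1}\le 2r(n)$ (empty admissible window), yet the converse you actually run is about the minimum \emph{single} spacing: you argue that $\prob{\min_j V_j>2r(n)}$ fails to tend to $1$ when $r(n)=\Theta(1/n^2)$ and declare that this ``gives the necessity.'' It does not. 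A target with one adjacent spacing below $2r(n)$ and the other large is still identifiable by a sensor on the far side, so the event $\{\min_j V_j\le 2r(n)\}$ is perfectly compatible with full separability; the quantity that must be shown to drop below $2r(n)$ is $\min_i W_i$ with $W_i:=V_i+V_{i+1}$. The paper uses the single-spacing Lemma~\ref{lemma:rnd:min:dist} only in the \emph{sufficiency} direction (since $\min_i V_i>2r(n)$ implies $\min_i W_i>2r(n)$), and proves the converse by a separate binning argument: with $k=cn^2$ bins it bounds the probability that every pair of consecutive bins contains at most two targets, which is the event implied by $\min_i W_i\ge 2/k$. Your remark that outside the regime $r(n)=o(1/n^2)$ the admissible windows are not all of full length $2r(n)$ is a statement about the convenience of your achievability proof, not a proof that separability fails: a shorter but nonempty window can still be hit once $m(n)$ is large enough, so no conclusion about the necessary scale of $r(n)$ follows.

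A second, smaller gap is in your converse for statement~\ref{full:sep:mn}: the coupon-collector lower bound over all $1/(2r(n))$ cells only shows that \emph{some} cell is missed, but only $n$ of the roughly $c_nn^2/2$ cells contain a target, and a missed empty cell does not obstruct separability. To conclude non-separability you must exhibit a missed \emph{occupied} cell; this is precisely the point of the paper's appendix on $m(n)\in\omega(n^2\log n)$, which reruns the first-moment and reverse-Markov bounds over the $n$ occupied cells only, and it is not delivered by quoting the coupon-collector limit for $1/(2r(n))$ coupons. Finally, for statement~\ref{partial:sep:mn} your two-stage chaining (geometric event first, then coverage conditional on it) will require care, because conditioning on the spacing event changes the joint law of the spacings and hence of the window lengths; the paper sidesteps this by defining a single indicator $W_i=\indicator(V_i>ar(n),\ V_{i+1}>ar(n),\ \exists\ \mbox{a sensor uniquely sensing } T_i)$, bounding $\prob{W_i=1}$ on both sides through the product form $(1-2ar(n))^n\left(1-\left(1-2(a-1)r(n)\right)^{m}\right)$ that exploits the independence of target and sensor placements, and then applying \eqref{eq:reversemarkov} to $\sum_i W_i$ in one shot---the same reverse-Markov skeleton you propose, but with no conditioning step to control.
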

Before the proof of Theorem~\ref{thm:separability_random}, we first
characterize the minimum separation between adjacent target
locations. Recall the definition of spacings, $V_i = T_{i} - T_{i-1},
\ 1 \leq i \leq n$ with $T_0 := 0.$ Lemma~\ref{lemma:rnd:min:dist}
characterizes $\min_{2 \leq i \leq n} V_i$ and is necessary to prove
statement~\ref{full:sep:rn} of Theorem~\ref{thm:separability_random}.
\begin{lemma}
  \label{lemma:rnd:min:dist}
  Let $c_n$ be any sequence such that $c_n \to \infty.$ For any
  sequence $d_n,$ such that $0 < d_n < 1/n,$ $\prob{\min_{2 \leq i
      \leq n} V_i \geq d_n} \to 1$ if and only if $d_n = \frac{1}{c_n
    n^2}.$
\end{lemma}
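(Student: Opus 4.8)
The plan is to compute $\prob{\min_{2 \le i \le n} V_i \ge d_n}$ in closed form and then read off its limiting behaviour from the elementary bound \eqref{eq:braveinequality}.

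First I would observe that $\min_{2 \le i \le n} V_i$ is the minimum of the $n-1$ internal spacings $V_2, V_3, \ldots, V_n$, and that since these spacings are jointly continuous the event $\{\min_{2\le i\le n} V_i \ge d_n\}$ has the same probability as $\{V_2 > d_n, \ldots, V_n > d_n\}$. Applying the $k$-spacing marginal identity \eqref{eq:kmarginals} with $k = n-1$, index set $(n_1,\ldots,n_{n-1}) = (2,3,\ldots,n)$, and $v_1 = \cdots = v_{n-1} = d_n$ then gives
\begin{displaymath}
  \prob{\min_{2 \le i \le n} V_i \ge d_n} \;=\; \bigl(1 - (n-1) d_n\bigr)^{n},
\end{displaymath}
which is legitimate because the hypothesis $0 < d_n < 1/n$ forces $(n-1)d_n < 1$, placing us inside the range of validity of \eqref{eq:kmarginals}.

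Next I would set $\theta = (n-1)d_n \in (0,1)$ and $m = n$ in \eqref{eq:braveinequality}, which sandwiches $\bigl(1-(n-1)d_n\bigr)^{n}$ between $\exp\!\bigl(-n(n-1)d_n/(1-(n-1)d_n)\bigr)$ and $\exp\!\bigl(-n(n-1)d_n\bigr)$. From this sandwich, $\bigl(1-(n-1)d_n\bigr)^{n}\to 1$ if and only if $n(n-1)d_n \to 0$: if $n(n-1)d_n\to 0$ then also $(n-1)d_n\to 0$, so both exponential bounds tend to $1$; conversely, since the upper bound $\exp(-n(n-1)d_n)$ never exceeds $1$, the probability tending to $1$ forces it to $1$ as well, hence $n(n-1)d_n\to 0$. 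Because $n(n-1)d_n \to 0$ is equivalent to $n^2 d_n \to 0$, this establishes $\prob{\min_{2\le i\le n} V_i \ge d_n} \to 1 \iff n^2 d_n \to 0$.

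Finally I would recast $n^2 d_n \to 0$ in the form asserted: since $d_n > 0$ one may always define $c_n := 1/(n^2 d_n)$, a positive sequence, and then $d_n = 1/(c_n n^2)$ with $c_n \to \infty$ exactly when $n^2 d_n \to 0$, the reverse implication being immediate. I do not expect a real obstacle here, as the argument is a short direct computation; the only point needing care is the bookkeeping in the spacing identity --- that the minimum is over exactly $n-1$ spacings, that the exponent appearing in \eqref{eq:kmarginals} is $n$ and not $k$, and that $d_n < 1/n$ is precisely the hypothesis that keeps $(n-1)d_n$ strictly below $1$ so the closed form holds.
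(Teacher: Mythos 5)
Your proof is correct and follows essentially the same route as the paper: the same closed form $\bigl(1-(n-1)d_n\bigr)^n$ for the probability, the same sandwich via \eqref{eq:braveinequality}, and the same reading-off of the limit. If anything, your ``only if'' direction is slightly tighter than the paper's, which only treats sequences with $d_n \geq 1/(cn^2)$ for a fixed constant $c$, whereas your equivalence $\prob{\min V_i \geq d_n}\to 1 \iff n^2 d_n \to 0$ covers the full negation (including oscillating $n^2 d_n$).
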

\begin{proof}
  From~\eqref{eq:spacingdistribution} we have:
  \begin{equation*}
    \prob{\min_{2 \leq i \leq n} V_i > d_n } = \prob{V_2 > d_n, V_3 >
      d_n, \ldots, V_n \geq d_n} = (1 - (n-1) d_n)^{n}.
  \end{equation*}
  Upper and lower bounds on the preceding probability
  using~\eqref{eq:braveinequality} are given below.
  \begin{equation*}
    \exponent{ - \frac{n (n-1) d_n}{1 - (n-1)d_n} } \leq
    \prob{\min_{2 \leq i \leq n} V_i > d_n} \leq \exponent{ - n (n-1) d_n}.
  \end{equation*}
  The `if' part of Lemma~\ref{lemma:rnd:min:dist} is proved as
  follows, let $d_n = 1/ (c_n n^2),$ for some $c_n \to \infty.$ Then
  $\exponent{- \frac{n(n-1) d_n}{1 - (n-1)d_n}}$ and $\exponent{- n (n
    - 1) d_n }$ are asymptotically equal to $\exponent{ - \frac{1}{
      \left(1 + \frac{1}{n-1} \right)\frac{1}{c_n} - \frac{1}{n}} }$
  and $\exponent{- \left(1 - \frac{1}{n}\right) \frac{1}{c_n} }$
  respectively. Thus $\prob{\min_{2 \leq i \leq n} V_i > d_n } \to 1.$
  For the `only if' part, let $d_n \geq 1 / (c n^2),$ for some real
  constant $c > 0.$ Then for any $n > 1,$ $\prob{\min_{2 \leq i \leq
      n} V_i > d_n} \ \leq \ \exponent{- (1 - \frac{1}{n}) \frac{1}{c}
  } = 1 - \epsilon$ where $\epsilon = \exponent{- (1 - \frac{1}{n})
    \frac{1}{c} } > 0.$ The proof of Lemma~\ref{lemma:rnd:min:dist} is
  complete.
\end{proof}
\begin{proof}[of Theorem~\ref{thm:separability_random}]
  Observe that $T_i$ cannot be separated if both $V_{i} < r(n)$ and
  $V_{i+1} < r(n).$ Defining $W_i:=V_{i}+V_{i+1},$ for $2 \leq i \leq
  n-1,$ we see from
  Section~\ref{sec:system_model}\ignore{(pg.~\pageref{unique:coverage})},
  that it is necessary to have $\min_{2 \leq i \leq n-1} W_i > 2 r(n)$
  for separability. Let us now characterize this minimum and prove
  statement~\ref{full:sep:rn} in the following two steps.
  \begin{enumerate}
  \item We first prove that for some finite constant $c > 0,$ if $r(n)
    = 1/cn^2$ then $\prob{\min W_i \geq 2 r(n)}$ $ < 1.$ Let
    $\Interval := [0,1]$ be divided into $k$ equal sized contiguous
    intervals, referred to as bins in this proof. Recall that the
    target locations are chosen uniformly i.i.d.~in $\Interval.$ The
    event $\left( \min_i \ W_i \ \geq \frac{2}{k} \right)$ implies
    that there exists at most $2$ target locations in any $2$
    consecutive bins, i.e.
    \begin{equation}
      \left( \min_i \ W_i \ \geq \frac{2}{k} \right) \
      \Rightarrow \ \mbox{there are at most two target locations
        in any two consecutive bins}.
      \label{eqn:event:equiv}
    \end{equation}
    This is illustrated in Fig.~\ref{fig:event:wi:geq:2k}. Adjacent
    target locations could be in the same bin (See $T_i, T_{i+1}$ in
    Fig.~\ref{fig:event:wi:geq:2k}) or adjacent bins (See $T_j,
    T_{j+1}$ in Fig.~\ref{fig:event:wi:geq:2k}). The possible
    locations of targets $T_{i-1}, T_{i+2}, T_{j-1}, T_{j+2}$ such
    that $\min W_i \geq 2/k$ are shown as shaded regions in
    Fig.~\ref{fig:event:wi:geq:2k}. The proof
    of~\eqref{eqn:event:equiv} thus follows.
    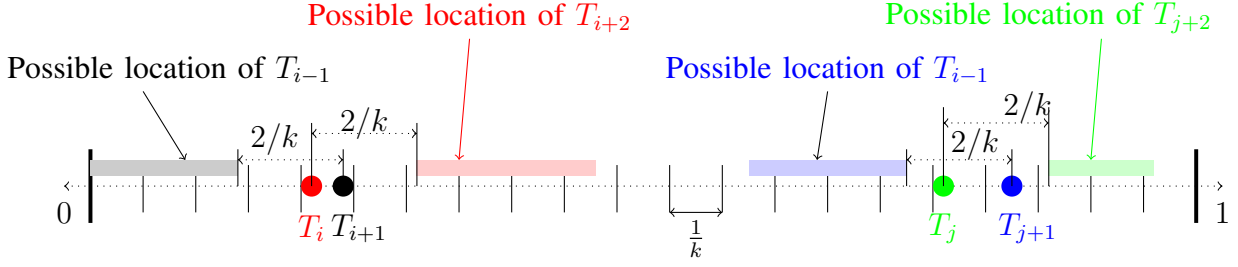
\begin{figure}
      \begin{center}
        \begin{tikzpicture}[scale=0.7]
          \draw [dotted, <->] (-0.5,0) -- (21.5,0); 
          \foreach \x in {0,1,...,21} \draw (\x,0.4) -- (\x, -0.5);
          \node at (-0.5,-0.5) {$0$};
          \node at (21.5,-0.5) {$1$};
          \draw [line width=1.5pt] (0, -.7) -- (0, .7);
          \draw [line width=1.5pt] (21, -.7) -- (21, .7);
          \draw[<->] (11,-0.5) -- (12,-0.5);
          \node at (11.5, -1) {$\frac{1}{k}$};
          
          \fill [color=red] (4.2,0) circle (.2);
          \node [color=red] at (4.2, -.8) {$T_i$};
          \fill [color=red!20] (6.2,0.2) rectangle (9.6,.5);
          \node [color=red] at (7.2, 3.2) {Possible location of $T_{i+2}$};
          \draw [color=red, ->] (7.2, 2.8) -- (7, 0.5);
          \draw [dotted, <->] (4.2,1) -- (6.2, 1);
          \node at (5.2, 1.2) {$2/k$};
          \draw (4.2,0) -- (4.2, 1.2);
          \draw (6.2,0) -- (6.2, 1.2);

          \fill [color=black] (4.8,0) circle (.2);
          \node [color=black] at (5.1, -.8) {$T_{i+1}$};
          \fill [color=black!20] (2.8,0.2) rectangle (0,.5);
          \node [color=black] at (1.5, 2.2) {Possible location of $T_{i-1}$};
          \draw [->] (1,1.8) -- (1.8, 0.5);
          \draw [dotted, <->] (4.8,.5) -- (2.8, .5);
          \node at (3.5, .9) {$2/k$};
          \draw (4.8,0) -- (4.8, .7);
          \draw (2.8,0) -- (2.8, .7);

          \fill [color=green] (16.2,0) circle (.2);
          \node [color=green] at (16.2, -0.8) {$T_{j}$};
          \fill [color=green!20] (18.2,0.2) rectangle (20.2,.5);
          \node [color=green] at (18.2, 3.2) {Possible location of $T_{j+2}$};
          \draw [color=green, ->] (19.2,2.8) -- (19, 0.5);
          \draw [dotted, <->] (18.2,1.2) -- (16.2, 1.2);
          \node at (17.8, 1.5) {$2/k$};
          \draw (18.2,0) -- (18.2, 1.5);
          \draw (16.2,0) -- (16.2, 1.5);

          \fill [color=blue] (17.5,0) circle (.2);
          \node [color=blue] at (17.8, -0.8) {$T_{j+1}$};
          \fill [color=blue!20] (15.5,0.2) rectangle (12.5,.5);
          \node [color=blue] at (14, 2.2) {Possible location of $T_{i-1}$};
          \draw [->] (14,1.8) -- (13.8, 0.5);
          \draw [dotted, <->] (15.5,.5) -- (17.5, .5);
          \node at (16.8, .8) {$2/k$};
          \draw (17.5,0) -- (17.5, .7);
          \draw (15.5,0) -- (15.5, .7);
        \end{tikzpicture}
        \caption{Illustrating the implication
          in~\eqref{eqn:event:equiv} that $\min W_i \geq 2/k
          \Rightarrow$ there are $\leq 2$ target locations in any two
          consecutive bins.}
        \label{fig:event:wi:geq:2k}
      \end{center}
    \end{figure}

    Number the bins starting at $1$ from the left. Let $Y_i$ be the
    indicator variable that collectively in bins $i$ and $i+1$ there
    are at most $2$ target locations.
    \begin{displaymath}
      Y_i = \mathbbm{1}\left(\exists \leq 2 \mbox{ target locations in
          bins } \{i, i+1\} \right).
    \end{displaymath}
    Applying the Chernoff bound to a binomial random variable with
    parameters $(n, 2/k),$ the following bound on $\prob{Y_i = 1}$ is
    obtained.
    \begin{displaymath}
      \prob{Y_i = 1} \leq \left(\frac{n}{k}\right)^2 \left(\frac{1 -
          \frac{2}{k}}{1 - \frac{2}{n}} \right)^{n-2}.
    \end{displaymath}
    Using $k = c n^2$ in the preceding expression, we have
    \begin{equation*}
      \hspace{-2em} \prob{Y_i = 1} \leq \left(\frac{1}{c n}\right)^2
      \left(1 + \frac{2 (cn - 1)}{cn (n-2)} \right)^{n-2} \leq \exponent{2
        \left(1 - \frac{1}{cn}\right) - 2 \log n - 2 \log c} \to 0.
    \end{equation*}
    The second inequality uses $1 + x < e^x.$ Now using the preceding
    relation between the events and then the Markov inequality,
    \begin{displaymath}
      \prob{\min W_i \geq \frac{2}{cn^2}} \ \leq \
      \prob{\sum_{i=1}^{k-1} Y_i \geq k - 1} \ \leq \ \EXP{Y_i}.
    \end{displaymath}
    Combining this result with the Chernoff bound on $\prob{Y_i = 1},$
    we conclude that if $r(n) = 1/cn^2$ for some finite positive
    constant $c,$ then $\prob{\min W_i \geq 2r(n)} \to 0.$
  \item From Lemma~\ref{lemma:rnd:min:dist}, see that if $r(n) = 1/
    \left( n^2 c_n \right),$ for some $c_n \to \infty,$ then
    $\prob{\min V_i \geq r(n)} \to 1$ which implies $\prob{\min W_i
      \geq 2 r(n)} \to 1.$
  \end{enumerate}
  This completes the proof of statement~\ref{full:sep:rn}. We now
  prove statement~\ref{full:sep:mn}. Let $r(n)=1/(c_n n^2),$ divide
  $\Interval$ into $1 / (2r(n)) = c_n n^2/(2 )$ intervals of equal
  width. Every subinterval contains at most one target, with high
  probability. Recall that for such $r(n)$ coverage implies
  identifiability. Then analogous to full separability of
  targets-on-grid model, it is necessary and sufficient to have at
  least one sensor in all the subintervals that contain a target.
  Thus for full separability of uniformly distributed targets, sensors
  with $r(n) = \frac{1}{c_n n^2},$ where $c_n \to \infty,$ $m(n) =
  (c_n n^2/(2)) \ ( \log \ (n^2c_n/(2)) \ + f_n ),$ for any $f_n \to
  \infty,$ is necessary and sufficient. This completes the proof of
  statement~\ref{full:sep:mn}. Remark~\ref{rem:partial:rnd} is the
  prelude to the proof of statement~\ref{partial:sep:rn}.
  \begin{remark}
    \label{rem:partial:rnd}
    In partial separability, since at least $\alpha n$ target
    locations are identifiable, the number of sensors needed is
    clearly not sub-linear. Our strategy thus far has been to divide
    $\Interval$ into contiguous non-overlapping cells such that $r(n)$
    is less than half of cell width and choose $m(n)$ such that there
    is at least one sensor in each cell. Following this process, there
    are two approaches.
    \begin{enumerate}
    \item Choose a small cell size such that all targets are alone in
      their cells and then uniquely cover at least $\alpha n$ of the
      cells containing targets.
    \item Choose a large cell size so that at least $\alpha n$ target
      locations are alone in their cells and choose $m(n)$ to uniquely
      cover all the cells.
    \end{enumerate}
    We adopt the latter approach in the next proof. Recall that
    in the targets-on-grid model, the $m(n)$ required for partial
    separability is lesser than the full separability case by a factor
    of $\log n.$ Thus, it can be expected that the critical number of
    sensors for partial separability of randomly deployed target
    locations will be, in the order sense, smaller than $c_n n^2 \log
    n$ for any $c_n \to \infty.$
  \end{remark}

  We now prove statement~\ref{partial:sep:rn}. Recall the definition
  of spacings from Section~\ref{sec:math}, $V_i = T_i - T_{i-1},$ and
  define $Z_i$ as the indicator variable corresponding to the $i$-th
  target location as follows:
  \begin{equation*}
    Z_i := \indicator \left( V_{i} > r(n) \mbox{ and } V_{i+1} >
      r(n) \right). 
  \end{equation*}
  From~\eqref{eq:kmarginals}, $\EXP{Z_i} = \prob{ Z_i = 1} = (1 - 2
  r(n))^n$. Using~\eqref{eq:reversemarkov}, for any $\alpha_1$ such
  that $\alpha < \alpha_1 < 1,$ we have:
  \begin{equation}
    1 - \frac{1 - \left(1 - 2 r(n) \right)^n}{1 - \alpha_1} \leq
    \prob{\sum_{i=1}^n Z_i \geq \alpha_1 n } \leq \frac{\left( 1
        - 2 r(n) \right)^n}{\alpha_1}. 
    \label{eq:partial:rn:zi}
  \end{equation}
  We first prove statement~\ref{partial:sep:rn:suff}. Let $r(n)$ be
  chosen such that $r(n) < 0.5 /\left( \left(n/c_1\right) + 1\right).$
  Then using the definition of $c_1,$ the following equivalence is
  direct.
  \begin{equation}
    r(n) < \frac{0.5}{ \frac{n}{c_1} + 1} \ \iff \
    \exponent{-\frac{n}{\frac{1}{2r(n)} - 1} } > 1 - (1 - \alpha_1)(1 -
    \beta).
    \label{eq:temp:2}
  \end{equation}
  Thus using~\eqref{eq:partial:rn:zi} and~\eqref{eq:braveinequality}
  in the second inequality of~\eqref{eq:temp:2}, we have
  \begin{eqnarray*}
    \prob{\sum_{i=1}^n Z_i \geq \alpha_1 n } & \geq & 1 - \frac{1 -
      \left(1 - 2 r(n) \right)^n}{1 - \alpha_1} > \beta.
  \end{eqnarray*}
  This completes the proof of statement~\ref{partial:sep:rn:suff}. To
  prove statement~\ref{partial:sep:rn:nec}, let $r(n) > \frac{1}{2n}
  \log \left(\frac{1}{\alpha_1 \beta} \right),$ then
  from~\eqref{eq:partial:rn:zi}, $\prob{\sum_{i=1}^n Z_i \geq \alpha_1
    n } \ \leq \ (1 - 2 r(n))^n/ \alpha_1 \ < \ \beta.$ This completes
  the proof of statement~\ref{partial:sep:rn}.
  
  Before the proof of statement~\ref{partial:sep:mn}, we first obtain
  bounds on the probability of having at least $\alpha n$ identifiable
  target locations as in~\eqref{eqn:wn:markov}. Towards that, let
  $r(n) = c_1 /(4n),$ choose a constant $a > 1,$ and define the
  indicator random variable $W_i$ as follows.
  \begin{equation*}
    W_i := \indicator \left( V_i > a r(n) \ \& \ V_{i+1} > a r(n) \ \&
      \ \exists \mbox{ at least 1 sensor that uniquely senses the
        target} \right).
  \end{equation*}
  Once again for notational convenience, we use $m$ instead of $m(n)$
  for the rest of this proof. Since the target locations and sensor
  locations are chosen independently, we have
  \begin{eqnarray*}
    \prob{W_i = 1} & = & \prob{V_i > a r(n) \ \& \ V_{i+1} > a r(n)}
    \prob{\geq 1 \mbox{ sensors uniquely sense target } i} \\
    & = & \left( 1 - 2 a r(n) \right)^n \ \left( 1 - \left( 1 - 2 (a -
        1) r(n) \right)^m \right).
  \end{eqnarray*}
  The bounds on $\prob{W_i = 1}$ are obtained
  using~\eqref{eq:braveinequality} in the preceding expression.
  \begin{equation}
    \small{ \exponent{- \frac{2 a n r(n)}{ 1 - 2 a r(n)} -
        \frac{1}{e^{2 \left(a - 1 \right) m r(n)} -1} } \leq \prob{W_i = 1}
      \leq \exponent{ - 2 a n r(n) - \exponent{ -\frac{m }{\frac{1}{2
              \left(a - 1 \right) r(n)} - 1}} }}.
    \label{eqn:wn:ineq}
  \end{equation}
  Using~\eqref{eq:reversemarkov} in~\eqref{eqn:wn:ineq}, we have
  \begin{equation}
    \frac{\prob{W_i=1} - \alpha}{1 - \alpha} \leq \prob{\sum_{i=1}^n
      W_i \geq \alpha n} \leq \frac{\sum_{i=1}^n \prob{W_i = 1}}{\alpha n}.
    \label{eqn:wn:markov}
  \end{equation}
  Next, we prove statement~\ref{partial:sep:mn:suff}. Using $\theta_1
  < 2 n r(n)/ c_1 < \theta_2,$ for large $n,$ where $n > 2 a \theta_2
  c_1,$ we have the first implication.
  \begin{eqnarray*}
    & & \hspace{-1in} m \geq \left(\frac{n}{c_1 \theta_1 \left(a-1
        \right)} \right) \log \left(1 + \frac{1}{c_2 - 2 a \theta_2 c_1}
    \right) \\
    & \Rightarrow & m \geq \left(\frac{1}{2 \left(a-1 \right) r(n)}
    \right) \log \left(1 + \frac{1}{\log \left(\frac{1}{1 - \left( 1 -
              \alpha\right)\left(1 - \beta \right)} \right) - \frac{2 n a r(n)}{1 -
          2 a r(n)}} \right). \\
    & \iff & \frac{1}{e^{2 m \left( a -1 \right) r(n)} - 1} \leq \log
    \left( \frac{1}{1 - \left(1 - \alpha\right)\left(1 - \beta\right)
    } \right) - \frac{2 a n r(n)}{1 - 2 a r(n)}. \\
  & \iff & \exponent{- \frac{2 a n r(n)}{1 - 2 ar(n)} -
    \frac{1}{e^{2 \left(a-1\right) m r(n)} - 1} } \geq \alpha +
  (1 - \alpha) \beta.
  \end{eqnarray*}
  The second and third equivalences are obtained by rearranging
  terms. Thus using the final expression and~\eqref{eqn:wn:ineq}
  in~\eqref{eqn:wn:markov}, we have
  \begin{displaymath}
    \prob{\sum_{i=1}^n W_i \geq \alpha n} \geq \frac{\exponent{-
        \frac{2 a n r(n)}{1 - 2 a r(n)} - \frac{1}{\exponent{2 m \left( a -1
            \right) r(n)} - 1} } - \alpha}{1 - \alpha} \geq \beta.
  \end{displaymath}
  This completes the proof of
  statement~\ref{partial:sep:mn:suff}. Next we give the proof of
  statement~\ref{partial:sep:mn:nec}.  For $\theta_1 < 2 n r(n)/c_1 <
  \theta_2,$ we have the following using the definition of $c_3.$
  \begin{eqnarray*}
    & & \hspace{-1in} m < \left( \frac{n}{\theta_2 (a-1)c_1} - 1
    \right) \log \left( \frac{1}{c_3 - a \theta_1 c_1} \right) \\
    & \Rightarrow & m < \left( \frac{1}{2\left(a-1 \right)r(n)} - 1
    \right) \log \left( \frac{1}{\log \left( \frac{1}{\alpha \beta}
        \right) - 2 n a r(n)} \right). \\
    & \iff & \exponent{ - \frac{m}{\frac{1}{2 \left(a-1 \right) r(n)}
        - 1}} > \log \left( \frac{1}{\alpha \beta} \right) - 2 a n
    r(n).
  \end{eqnarray*}
  Further using the preceding expression and~\eqref{eqn:wn:ineq}
  in~\eqref{eqn:wn:markov}
  \begin{displaymath}
    \prob{\sum_{i=1}^n W_i \geq \alpha n} \leq \frac{\exponent{ - 2 a
        n r(n) - \exponent{ - \frac{m}{\frac{1}{2 \left( a -1 \right) r(n)} -
            1}} }}{\alpha} < \beta.
  \end{displaymath}
  This completes the proof of Theorem~\ref{thm:separability_random}.
\end{proof}
\begin{remark}
  \ignore{From Theorem~\ref{thm:separability_deterministic}, for
    targets-on-grid model, the sensing radius $r(n) < 1/n$ for both
    full and partial separability and $m(n) \in \Theta(n \log n)$ for
    full separability and $m(n) \in \Theta(n)$ for partial
    separability. To achieve full separability of random targets we
    need $r(n) = \theta/(c_n n^2),$ and $m(n) = (c_n n^2 / (2 \theta))
    \log (n + f_n),$ for $0 < \theta < 0.5$ and any $c_n, f_n \to
    \infty.$ }
  The following natural schemes for partial separability exist:
  \begin{enumerate}
  \item Following the partial separability of targets-on-grid model,
    choose the sensing radius $r(n)$ such that $\alpha n$ target
    locations are at least $2r(n)$ away from their adjacent neighbors
    with probability $\geq \beta$, and cover those particular $\alpha
    n$ target locations with high probability. This will require $r(n)
    = \theta /(2n)$ and $m(n) = (n/\theta) \log (n/ \theta),$ for some
    constant $\theta > 0.$
  \item Following the full separability of random targets, choose the
    sensing radius $r(n)$ such that all $n$ target locations are at
    least $2r(n)$ away from their adjacent neighbors with high
    probability and cover $\alpha n$ of them with probability $\geq
    \beta$. This will require $r(n) = \theta / (c_n n^2)$ and $m(n) =
    \tilde{\theta} c_n n^2,$ for some constants $\theta,
    \tilde{\theta}$ and some $c_n \to \infty.$
  \end{enumerate}
  Observe from the partial separability results that we have $r(n) \in
  \Theta(1/n)$ and $m(n) \in \Theta(n),$ which are tighter than both
  the above two approaches.
\end{remark}
\begin{remark}
  Modeling a fixed number of target locations may seem impractical, so
  let target locations be realizations of a homogeneous spatial
  Poisson process of intensity $n,$ independent of sensor
  deployment. The results for separability in this Poisson target
  deployment are similar to the (uniform)random target model. The
  proof is a special case of the proof of
  Theorem~\ref{thm:adversary:grid} with $\gamma = 0$ and is omitted.
\end{remark}
\section{Separability in $2$-dimensions}
\label{sec:2d}

In this section, the region of interest is $\Interval^2 := [0,1]^2.$
Each sensor senses all points within $r(n)$ (circle of radius $r(n)$)
from it. Theorem~\ref{thm:2d} summarizes the results of this section.
\begin{theorem}[Separability in $2$-dimensions]
  \label{thm:2d}
  The sensing region is $\Interval^2$ and $m(n)$ denotes the number of
  sensors that are deployed uniformly i.i.d~ in $\Interval^2.$
  \begin{enumerate}
  \item \label{2d:grid} In the targets on grid model, the set of $n$
    target locations, $\TargetGrid^2,$ are the mid points of the cells
    formed when we tessellate $\Interval^2$ into $n$ square cells,
    each of size $\frac{1}{\sqrt{n}} \times \frac{1}{\sqrt{n}}.$ For
    the targets on grid model, the following are true.
    \begin{enumerate}
    \item $0 < \pi r(n)^2 < (\pi/n)$ is necessary for separability.
    \item Let $\pi r(n)^2 = \pi/4n,$ then
      \begin{enumerate}
      \item If $m(n) \geq \frac{4 n}{\pi} \left( \log
          \left(\frac{4n}{\pi}\right) + c_n \right),$ for any $c_n \to
        \infty,$ then \newline $\prob{\mbox{all $n$ target locations
            are separable}} \to 1.$
      \item If $m(n) \leq \frac{4 n}{\pi} \left( \log
          \left(\frac{4n}{\pi}\right) - c_n \right),$ for any $c_n \to
        \infty,$ then \newline $\prob{\mbox{all $n$ target locations
            are separable}} \to 0.$
      \end{enumerate}
    \item Let $\pi r(n)^2 = \pi/4n.$ Given $0 < \alpha < 1,$ and $0 <
      \beta < 1,$ the following are true.
      \begin{enumerate}
      \item If $m(n) \geq (4n/\pi) \ \log \left( \frac{1}{(1 - \alpha)
            (1 - \beta)} \right) $ then $\prob{\mbox{at least } \alpha
          n \mbox{ targets are separable}} >
        \beta.$~\label{fixedpartialsuff2}
      \item If $m(n) < (4n/\pi - 1) \ \log \left( \frac{1}{1 - \alpha
            \beta} \right) $ then $\prob{\mbox{at least } \alpha n
          \mbox{ targets are separable}} <
        \beta.$~\label{fixed:partial:nec:2}
      \end{enumerate}
    \end{enumerate}
  \item In the random target case, the $n$ target locations, denoted
    by $\TargetRandom^2,$ are deployed uniformly i.i.d.~in
    $\Interval^2.$ The following are true.
    \begin{enumerate}
    \item For full separability of $n$ target locations, it is
      necessary that $r(n)^2 = 1/(c_n n)$ for some $c_n \to
      \infty.$~\label{2d:full:sep:rn}
    \item \label{2d:full:sep:mn} Let $\pi r(n)^2 = 1/n c_n,$ for some
      $c_n \to \infty.$ The following are true
      \begin{enumerate}
      \item For any $g_n \to \infty,$ if $m(n) \leq \left(
          \frac{1}{\pi r(n)^2}\right) \ \left(\log \frac{1}{\pi
            r(n)^2} + g_n \right) = \left( n c_n \right) \ \left(\log
          n c_n + g_n \right),$ \newline then $\prob{\mbox{all $n$
            target locations are separable}} \to 1.$
      \item For any $g_n \to \infty,$ if $m(n) \leq \left(
          \frac{1}{\pi r(n)^2}\right) \ \left(\log \frac{1}{\pi
            r(n)^2} - g_n \right) = \left( n c_n \right) \ \left(\log
          n c_n - g_n \right),$ \newline then $\prob{\mbox{all $n$
            target locations are separable}} \to 0.$
      \end{enumerate}
    \item For any $0 < \alpha_1, \beta < 1,$ let $c_1 := \log \left(
        1/ \left( 1 - \left(1 - \alpha_1\right)\left(1 - \beta\right)
        \right) \right)$ and $a>1$ be a finite
      constant. ~\label{2d:partial:sep:rn}
      \begin{enumerate}
      \item If $\pi r(n)^2 \leq \frac{1}{a^2 \left(\frac{n-1}{c_1} + 1
          \right)},$ then \newline $\prob{\mbox{at least } \alpha_1 n
          \mbox{ targets are more than $a r(n)$ away from adjacent
            neighbors}} \geq \beta.$
      \item If $\pi r(n)^2 > \frac{ \log \left(\frac{1}{\alpha_1
              \beta} \right)}{a^2 n},$ then \newline $\prob{\mbox{at
            least } \alpha_1 n \mbox{ targets are more than $a r(n)$
            away from adjacent neighbors}} < \beta.$
      \end{enumerate}
    \item For a given $0 < \alpha, \beta < 1,$ choose an $\alpha_1$
      such that $\alpha < \alpha_1 < 1.$ Let $c_2 := \log$ $\left( 1/
        \left( 1 - \left(1 - \alpha\right)\left(1 - \beta\right)
        \right) \right),$ $c_3 := \log \left( 1/(\alpha \beta)
      \right)$ and $c_1$ is as defined in
      statement~\ref{2d:partial:sep:rn}. Let $\theta_1 \left(c_1/
        \left(a^2 n \right) \right) \leq \pi r(n)^2 \leq \theta_2
      \left(c_1/ \left(a^2 n \right) \right),$ for any $\theta_1,
      \theta_2$ such that $0 < \theta_1 \leq \theta_2 < 1/ \left(1 +
        \left( c_1/ n\right) \right).$ Choose a finite positive
      constant $a$ such that $a^2 > \max \left\{1, c_2/ (2 \theta_1
        c_1) \right\}.$ ~\label{2d:partial:sep:mn}
      \begin{enumerate}
      \item If $m(n) \geq \left( \frac{n}{\theta_1 (a- 1)^2c_1}
        \right) \log\left(1 + \frac{1}{c_2 - a^2 \theta_2 c_1}
        \right),$ then \newline $\prob{\geq \alpha n \mbox{ target
            locations are separable}} \geq \beta.$
      \item If $m(n) < \left( \frac{n}{\theta_2 (a-1)^2c_1} - 1
        \right) \log \left( \frac{1}{c_3 - a^2 \theta_1 c_1} \right),$
        then \newline $\prob{\geq \alpha n \mbox{ target locations are
            separable}} < \beta.$
      \end{enumerate}
    \end{enumerate}
  \end{enumerate}
\end{theorem}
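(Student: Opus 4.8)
The plan is to run every part of Theorem~\ref{thm:2d} through the one-dimensional machinery of Theorems~\ref{thm:separability_deterministic} and~\ref{thm:separability_random}, replacing the interval $\Interval$ by the square $\Interval^2$ and the length of a sensing interval by the area of a sensing disk.

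\emph{Grid model.} First tessellate $\Interval^2$ into the $n$ squares of side $1/\sqrt n$ whose centres are the points of $\TargetGrid^2$. With $\pi r(n)^2=\pi/(4n)$, i.e.\ $r(n)=1/(2\sqrt n)$, the sensing disk has diameter $1/\sqrt n$, which is exactly the spacing of the closest (axis-aligned) pair of targets, so every sensor covers at most one target and ``covered'' coincides with ``identifiable'' just as in the 1-D case $r(n)\le 1/(2n)$. A uniformly placed sensor covers a fixed target with probability equal to the disk area $\pi r(n)^2=\pi/(4n)$, and the target disks are pairwise disjoint; treating the rest of $\Interval^2$ as ``junk'' coupons, this is a coupon-collector problem with effective population $4n/\pi$, so the coupon estimates recalled in Section~\ref{sec:math} give part~(1b), and applying the reverse-Markov bound~\eqref{eq:reversemarkov} to the indicators $1-\ErrorEvent_i$ of ``$T_i$ is covered'' gives part~(1c), exactly as in statement~\ref{fixed:partialmn} of Theorem~\ref{thm:separability_deterministic} with $a/n$ replaced by $\pi/(4n)$. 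For the necessary condition $\pi r(n)^2<\pi/n$: if $r(n)\ge 1/\sqrt n$ then a disk about a cell centre contains its four axis-aligned neighbours whenever $r(n)>1/\sqrt n$, and at $r(n)=1/\sqrt n$ the sensor positions that cover a single target form a measure-zero set, so with probability one no target is identifiable.

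\emph{Random model.} Begin by extending the identifiability discussion of Section~\ref{sec:system_model} to the plane: a sensor at $x$ pins down the configuration at $T_i$ iff $\SensingRegion(x,r(n))$ contains $T_i$ and no other target, so $T_i$ is identifiable iff the disk about $T_i$ is not covered by the union of the disks about the other targets and, in addition, a sensor lands in the uncovered part. A convenient sufficient geometry is that every other target lies farther than $ar(n)$ from $T_i$ (with $a=2$ in part~(2a)): then $B(T_i,(a-1)r(n))$ meets no other target disk, so the free area is at least $\pi(a-1)^2r(n)^2$; this is the 2-D analogue of ``$V_i>ar(n)$ and $V_{i+1}>ar(n)$,'' and it is why $a^2$ and $(a-1)^2$ appear where the 1-D statement has $a$ and $a-1$. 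Part~(2a) then needs the 2-D counterpart of Lemma~\ref{lemma:rnd:min:dist}, a bound on the minimum inter-target distance; since the square has no exact spacing identity like~\eqref{eq:kmarginals}, I would obtain it by binning $\Interval^2$ at a scale tied to $r(n)$, noting as in~\eqref{eqn:event:equiv} that a small separation forces two or more targets into a bounded block of cells, and bounding that event by a first-moment / Chernoff estimate on a $\mathrm{Binomial}(n,\cdot)$ count with the number of bins tending to infinity, which produces the stated scaling. Once $r(n)$ lies in the admissible range, tessellating at scale $2r(n)$ puts at most one target in each cell and reduces part~(2b) to coupon collecting over $1/(\pi r(n)^2)$ cells; parts~(2c)--(2d) then mirror statements~\ref{partial:sep:rn} and~\ref{partial:sep:mn} of Theorem~\ref{thm:separability_random}: define $Z_i$ (resp.\ $W_i$) as the indicator that $T_i$ is $ar(n)$-isolated (resp.\ $ar(n)$-isolated and hit by a sensor that uniquely senses it), evaluate $\prob{Z_i=1}$ and $\prob{W_i=1}$ from the independent placements up to the usual bookkeeping for disks clipped by $\partial\Interval^2$, and feed the results into~\eqref{eq:reversemarkov} and~\eqref{eq:braveinequality}; the constants $c_1,c_2,c_3,\theta_1,\theta_2,a$ are calibrated precisely so that the exponents that come out fall on the correct side of $\beta$.

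\emph{Main obstacle.} The real work is the 2-D minimum-separation lemma that replaces Lemma~\ref{lemma:rnd:min:dist}: the square admits no closed-form distribution for inter-point spacings, so the estimate has to go through discretisation, and one must handle the smaller boundary cells and the fact that a sensing disk --- unlike a sensing interval --- can overlap as many as four cells at once, which is why the tessellation scale should be a fixed fraction of $2r(n)$ rather than exactly $2r(n)$. Once that lemma and the plane version of the identifiability condition are in place, everything else is the one-dimensional argument with ``length of an interval'' replaced by ``area of a disk.''
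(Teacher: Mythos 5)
Your overall route is the same as the paper's: part 1 is the one-dimensional coupon-collector and reverse-Markov argument with $a/n$ replaced by the inscribed-disk area $\pi/(4n)$ (the paper simply declares this part self-evident), part 2(a) is handled by binning $\Interval^2$ and a first-moment estimate on a binomial count of targets per block, and parts 2(b)--(d) are the arguments of Theorem~\ref{thm:separability_random} with $W_i$ redefined exactly as the paper redefines it (no other target within $a\,r(n)$ of target $i$, and at least one sensor within $(a-1)r(n)$ of it). Your remarks about boundary clipping and about a sensing disk meeting up to four cells are refinements the paper does not make.

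The genuine gap is in the step you yourself single out as ``the real work,'' the two-dimensional replacement for Lemma~\ref{lemma:rnd:min:dist}: you assert that the binning/first-moment estimate ``produces the stated scaling,'' and it does not. The expected number of pairs of targets within distance $r(n)$ of each other is of order $n^2 r(n)^2$, so pairwise $r(n)$-separation of all $n$ targets holds with probability tending to one only when $r(n)=o(1/n)$, i.e.\ $\pi r(n)^2=o(1/n^2)$ --- two orders of $n$ below the scaling $\pi r(n)^2=1/(nc_n)$ claimed in statements~\ref{2d:full:sep:rn} and~\ref{2d:full:sep:mn}. Conversely, with $\pi r(n)^2=1/(nc_n)$ and $c_n$ growing slowly, there are with high probability $\Theta(n/c_n)\to\infty$ pairs of targets within $r(n)$ of one another, so the reduction ``every target is alone at scale $2r(n)$, hence coverage equals identifiability, hence coupon collecting over $1/(\pi r(n)^2)$ cells'' cannot be carried out for part 2(b) either. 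The underlying issue, which you would hit as soon as you tried to write the lemma down, is that in two dimensions a single close neighbour does not destroy identifiability --- $B(T_i,r)\setminus B(T_j,r)$ is a nonempty open lune --- so pairwise separation is sufficient but not necessary for a sensor position that uniquely covers $T_i$ to exist; a target becomes unidentifiable only when several neighbours surround it, and the correct threshold on $r(n)$ therefore comes from a different and harder event than the minimum inter-point distance. The paper's own sketch shares this defect (its Markov bound on the all-isolated event is an upper bound that tends to one, which establishes nothing, and bins of side $1/(c_n n)$ are silently equated with $\pi r(n)^2=1/(nc_n)$), so you have not diverged from the paper --- but you have inherited its hole, and the proposal as written would not close it.
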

\begin{figure}
  \begin{center}
    \begin{tikzpicture}
      \draw (0,0) rectangle (6,6);
      \fill (0.95,0.95) rectangle (1.05, 1.05);
      \node at (1.1, 0.9) [left] {$x_a$};
      \draw (1,1) circle (1);
      \draw [->] (1,1) -- (1,2);
      \node at (1.7,1.5) [left] {$r_n$};

      \fill[color=red] (5,2) circle (0.1);
      \node[color=red] at (5.4, 2) [below] {$T_c$};
      \fill (4.8,2.6) rectangle (4.9,2.7);
      \node at (4.9, 2.5) [left] {$x_d$};
      \draw (4.85,2.65) circle (1);
      \fill (5.3,1.2) rectangle (5.2,1.1);
      \node at (5.5, .8) [left] {$x_c$};
      \draw (5.25,1.15) circle (1);

      \fill[color=red] (3,5) circle (0.1);
      \node[color=red] at (2.7, 5.15) [above] {$T_b$};
      \draw[color=red] (3.8,4.6) circle (0.1);
      \node[color=red] at (3.9, 4.15) [right] {$T_a$};
      \fill (3.4,5.2) rectangle (3.5,5.3);
      \node at (3.6, 5.2) [right] {$x_b$};
      \draw (3.45,5.25) circle (1);

      \fill[color=red] (1,4) circle (0.1);
      \draw[color=red] (5,5) circle (0.1);
      \fill (0.4,2.2) rectangle (0.5,2.3);
      \fill (3.4,0.2) rectangle (3.5,0.3);

      \node at (-.1,-.1) [below] {$(0,0)$};
      \node at (6.1,6.1) [above] {$(1,1)$};
    \end{tikzpicture}
    \caption{Illustrating Identifiability in $2$ dimensions. The
      description of targets and sensors are identical to those in
      Fig.~\ref{fig:TargetsAndSensingII}}
    \label{fig:2d}
  \end{center}
\end{figure}
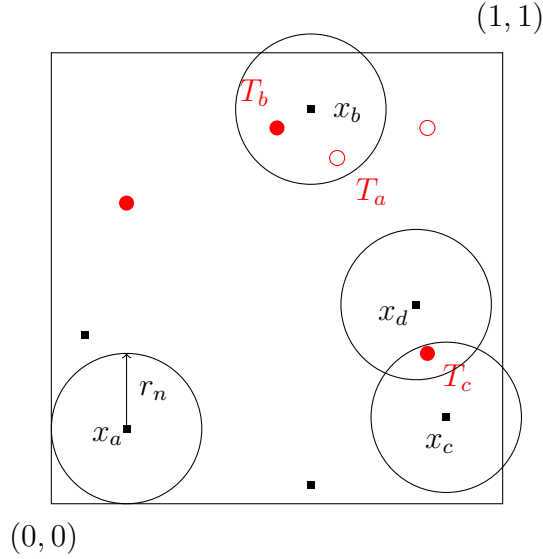
\begin{proof}
  The proof of statement~\ref{2d:grid} is self-evident and is
  omitted. It may be generalized to any $r(n)$ that satisfies the
  necessary condition $0 < \pi r(n)^2 < \pi/n.$

  Next we prove statement~\ref{2d:full:sep:rn}. Divide $[0,1]^2$ into
  $k^2$ equally sized squares, henceforth termed bins. Consider a
  `super-bin' to be a set of $2 \times 2$ adjacent bins
  $\left(\mbox{i.e., of size } \frac{2}{k} \times \frac{2}{k}
  \right).$ Similar to the proof of statement~\ref{full:sep:rn} of
  Theorem~\ref{thm:separability_random}, the following hold.
  \begin{enumerate}
  \item If $k = cn,$ then $\prob{\mbox{each super-bin has } \leq 1
      \mbox{ target location}} \to 0.$ 
  \item If $k = c_n n,$ for some $c_n \to \infty,$ then
    $\prob{\mbox{each super-bin has } \leq 1 \mbox{ target location}}
    \to 1.$ This also ensures that the minimum distance between
    adjacent targets to be of the form $1/nc_n$ for some $c_n \to
    \infty.$
  \end{enumerate}
  Thus we choose $\pi r(n)^2 = 1/ n c_n.$ Another way to see the
  minimum distance condition is as follows. Let $\ErrorEvent_i$ be the
  event that no other target location is within $r(n)$ of the $i$-th
  target location. From the Markov inequality, we know that 
  \begin{displaymath}
    \prob{\sum_{i=1}^n \ErrorEvent_i \geq n} \leq \EXP{\ErrorEvent_i}
    \leq \exponent{- (n-1) \pi r(n)^2 }.
  \end{displaymath}
  If $\pi r(n)^2 = c/n$ for some finite positive constant $c,$ then
  \begin{displaymath}
    \prob{\sum_{i=1}^n \ErrorEvent_i \geq n} \ \leq \exponent{-c
      \left(1 - 1/n \right) }<1.
  \end{displaymath}
  Thus we need $\pi r(n)^2 =
  1/nc_n$ for any $c_n \to \infty$ to ensure that no two target
  locations are within $r(n)$ distance of each other. Thus the proof
  of statement~\ref{2d:full:sep:rn} is complete. The proof of
  statement~\ref{2d:full:sep:mn} is a direct extension of the coupon
  collector result and is omitted.

  Once again for partial separability, the results in
  statements~\ref{2d:partial:sep:rn}, \ref{2d:partial:sep:mn} are
  identical to their one dimensional counterparts. The proofs are
  identically obtained by re-defining $W_i$ as
  \begin{displaymath}
    W_i = \mathbbm{1}\left(\small{\mbox{No other target is within $a
          r(n)$ of target } i \ \& \ \exists \mbox{ at least 1 sensor within
          $(a-1) r(n)$ of target } i }\right).
  \end{displaymath}
\end{proof}
\ignore{For the targets-on-grid model, the conditions on $r(n), m(n)$
  are identical to the one dimensional case because a bijective map
  exists between every interval of width $1/n$ in $\Interval$ and
  every cell of area $1/n$ in $\Interval^2.$}

\section{Separability in the presence of adversarial sensors}
\label{sec:adversaries}

In this section, we consider the sensing area to be $\Interval=[0,1]$
and sensors are deployed according to a spatial Poisson process of
intensity $m$ on $\Interval.$ Let $M (\sim \mbox{Poisson}(m))$ be the
random variable that denotes the number of sensors. In addition, we
assume that a subset, $\Adversaries,$ of the set of sensors
$\Sensors:= \{1,2,\ldots,M\},$ act as adversaries. We also assume that
all sensors report binary observations and the sensor locations are
known apriori. The sensors in the set $\Sensors \backslash
\Adversaries$ report their observations faithfully, we term them
``good'' sensors. The set $\Adversaries$ of adversarial sensors report
an output which may or may not depend on their observation. Each of
the $M$ sensors is an adversary i.i.d.~with probability $\gamma,0 <
\gamma < 1/2,$ independent of anything else; in other words, the good
and adversarial sensors are distributed according to independent
spatial Poisson processes of intensity $(1 - \gamma)m$ and $\gamma m$
respectively. Note that the set of adversaries, $\Adversaries,$ is
unknown to us for the purpose of decoding the target
configuration. First consider the targets-on-grid model, recall the
following results from
Theorem~\ref{thm:separability_deterministic}. Let the sensing radius
be chosen as $r(n) = a/2n$ (or $r(n) = (2-a)/2n)$ for $0 < a \leq 1.$
The following results hold.
\begin{enumerate}
\item $m(n) \geq (n/a) (\log n + c_n)$ guarantees full separability
  for any $c_n \to \infty.$
\item $m(n) \geq (n/a) \log \left(1/ \left[ (1 - \alpha) (1 - \beta)
    \right] \right)$ guarantees partial separability.
\end{enumerate}
\ignore{The output of an adversarial sensor can either be a function
  of only its observation or a function of observations of any subset
  of sensors or arbitrary. }
Observe that without any adversaries, the set of sensors that uniquely
cover a particular target location have the same observations (either
all zero or all one depending on the presence of a target at the
location). Adversaries corrupt the set of sensor observations, and
thus the set of sensors that uniquely sense target $i$ give an
arbitrary binary vector of observations. We assume that the
adversaries don't have knowledge of the number of sensors that
uniquely sense any target. It is easy to see $0 < r(n) < 1/n,$ similar
to statement~\ref{fixed:sep:rn} from
Theorem~\ref{thm:separability_deterministic}, since we need the good
sensor observations to decode the target configuration. We show that
if any $0 < \gamma < (1/2)$ fraction of sensors act as adversaries,
then, with high probability, $m \in \Theta(n \log n)$ ensures that
$\TargetGrid$ is full separable. Consider the following sub-optimal
scheme to decode the target locations from the set of observations and
sensor locations.
\begin{enumerate}
\item To decode the configuration of target $i,$ we only use the
  observations from set of sensors that cover only target $i.$
\item For $0 < \gamma < 1/2,$ we will prove that the number of
  adversarial sensors that uniquely cover target $t_i$ is dominated by
  the number of good sensors that uniquely cover target $t_i, \
  \forall 1 \leq i \leq n.$ Thus `majority decoding' on the set of
  outputs (corresponding to the set of sensors that uniquely cover
  target $t_i$) is necessary and sufficient to decode the state of
  target location $t_i$ for $1 \leq i \leq n$ independent of the
  adversary's behavior.
\end{enumerate}
Let $A_i, G_i,$ respectively be the random variables that denote
number of adversarial sensors and number of good sensors that only
sense target location $i, 1 \leq i \leq n.$ Recall that $A_i, G_i$ are
independent Poisson random variables with intensities $\lambda_A =
\gamma \lambda, \lambda_G = \bar{\gamma} \lambda$ respectively, where
$\bar{\gamma} := 1 - \gamma$ and $\lambda = m/n.$ Let $Q_i := G_i -
A_i.$ The main result of this section is given in
Theorem~\ref{thm:adversary:grid}. We will prove
Theorem~\ref{thm:adversary:grid} for $r(n) = 1/2n,$ however the result
holds for $r(n) = a/2n$ $\left( \mbox{or } (2-a)/(2n)\right)$ for $0 <
a \leq 1,$ with an appropriate change in the constant factors.
\begin{theorem}[Separability of targets-on-grid in the presence of
  adversaries] \label{thm:adversary:grid} Let $r(n) = 1/2n,$ $0 <
  \gamma < 0.5$ and the target locations be $\TargetGrid.$
  \begin{enumerate}
  \item For any $\epsilon > 0,$ if $\frac{m}{n \log n} \ \geq \ \left(
      \frac{1 + \epsilon}{1 - 2 \sqrt{\gamma \bar{\gamma}}} \right),$
    then $\prob{\mbox{all $n$ target locations are separable}} \to
    1.$ \label{adv:full:suff}
  \item For any $0 < \alpha, \beta < 1,$ and any $\epsilon > 0,$ if
    $\frac{m}{n} \geq \left(\frac{1 + \epsilon}{1 - 2 \sqrt{\gamma
          \bar{\gamma}}} \right) \ \log\left( \frac{1}{(1 - \alpha)(1
        - \beta)} \right),$ then \newline $\prob{\mbox{ at least }
      \alpha n \mbox{ target locations are identifiable} } \geq
    \beta.$\label{adv:partial:suff}
  \end{enumerate}
\end{theorem}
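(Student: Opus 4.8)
The plan is to reduce the adversarial problem to a coupon-collector-style argument, exactly as in Theorem~\ref{thm:separability_deterministic}, but with the coupon ``collected'' only when the set of sensors uniquely covering a given target location contains a strict majority of good sensors. For $r(n)=1/2n$, a sensor uniquely covers target location $i$ iff it falls in the interval of width $2r(n)=1/n$ centered at $T_i$; these $n$ intervals are disjoint. Hence $G_i$ and $A_i$ are independent $\mathrm{Poisson}(\bar\gamma\lambda)$ and $\mathrm{Poisson}(\gamma\lambda)$ random variables with $\lambda=m/n$, and across distinct $i$ the pairs $(G_i,A_i)$ are independent (disjoint regions of a Poisson process). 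Target location $i$ is correctly decoded by majority logic precisely when $Q_i:=G_i-A_i>0$ (ties must be handled: since $\gamma<1/2$ one can break ties in either direction, but the cleanest route is to demand $Q_i\ge 1$, which is what ``strict majority of good'' gives, and note $\prob{Q_i\le 0}$ dominates the error). So the whole theorem follows once we control $\prob{Q_i\le 0}$.

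The key estimate is a Chernoff bound on $\prob{Q_i\le 0}=\prob{A_i\ge G_i}$. Using the moment generating functions of the two independent Poissons, for any $s>0$,
\begin{displaymath}
  \prob{A_i - G_i \ge 0} \ \le \ \EXP{e^{s(A_i-G_i)}} \ = \ \exponent{\gamma\lambda(e^{s}-1) + \bar\gamma\lambda(e^{-s}-1)}.
\end{displaymath}
Optimizing over $s$ (take $e^{s}=\sqrt{\bar\gamma/\gamma}$) gives the bound
\begin{displaymath}
  \prob{Q_i \le 0} \ \le \ \exponent{-\lambda\left(\sqrt{\gamma}-\sqrt{\bar\gamma}\right)^2} \ = \ \exponent{-\lambda\left(1 - 2\sqrt{\gamma\bar\gamma}\right)},
\end{displaymath}
where the exponent is positive since $0<\gamma<1/2$. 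This is the analogue of ``a coupon is missed with probability $(1-a/n)^m \approx e^{-\lambda}$'' in the non-adversarial case, with the effective rate $\lambda$ replaced by $\lambda(1-2\sqrt{\gamma\bar\gamma})$. I expect this Chernoff optimization to be the technical heart of the argument; everything after it is bookkeeping identical to Theorem~\ref{thm:separability_deterministic}.

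For statement~\ref{adv:full:suff}, let $\ErrorEvent_i$ be the indicator that $Q_i\le 0$. By a union bound, $\prob{\mbox{some target misdecoded}} \le \sum_{i=1}^n \prob{Q_i\le 0} \le n\,\exponent{-\lambda(1-2\sqrt{\gamma\bar\gamma})}$. Substituting $\lambda=m/n$ and the hypothesis $m/(n\log n)\ge (1+\epsilon)/(1-2\sqrt{\gamma\bar\gamma})$ makes the exponent at least $(1+\epsilon)\log n$, so the sum is at most $n^{-\epsilon}\to 0$; hence the probability that all $n$ locations are separable tends to $1$. (One should also note the Poisson count $M$ concentrates, or simply work directly with the Poisson model throughout, which is cleaner and is the model the section sets up.) For statement~\ref{adv:partial:suff}, I would use the reverse Markov inequality~\eqref{eq:reversemarkov} in the form
\begin{displaymath}
  \prob{\sum_{i=1}^n (1-\ErrorEvent_i) \ge \alpha n} \ \ge \ \frac{1 - \prob{Q_i\le 0} - \alpha}{1-\alpha},
\end{displaymath}
which is $\ge\beta$ exactly when $\prob{Q_i\le 0}\le (1-\alpha)(1-\beta)$; by the Chernoff bound this holds as soon as $\lambda(1-2\sqrt{\gamma\bar\gamma})\ge \log(1/[(1-\alpha)(1-\beta)])$, i.e. as soon as $m/n \ge (1+\epsilon)/(1-2\sqrt{\gamma\bar\gamma})\cdot\log(1/[(1-\alpha)(1-\beta)])$ (the $1+\epsilon$ slack absorbing the difference between the Chernoff bound and the bare exponential, and any tie-breaking loss). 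The main obstacle, as noted, is getting the clean constant $1-2\sqrt{\gamma\bar\gamma}$ out of the two-sided Poisson large-deviation computation; the rest mirrors the non-adversarial proofs verbatim, with $n\log n$ and $n$ scalings unchanged, only the constant inflated by $1/(1-2\sqrt{\gamma\bar\gamma})$.
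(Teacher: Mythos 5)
Your proposal is correct and follows essentially the same route as the paper: the technical heart in both is the identical Chernoff/MGF computation for the difference of the two independent Poisson counts, optimized at $e^{s}=\sqrt{\bar{\gamma}/\gamma}$ to produce the rate $1-2\sqrt{\gamma\bar{\gamma}}$, followed by the reverse Markov inequality~\eqref{eq:reversemarkov} for the partial-separability claim. The only (immaterial) difference is in the full-separability step, where you apply a union bound $n\,e^{-c\lambda}\le n^{-\epsilon}$ while the paper multiplies the independent per-target success probabilities and invokes~\eqref{eq:braveinequality}; both yield the same conclusion, and your handling of the tie event $Q_i=0$ is in fact slightly more careful than the paper's.
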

\begin{proof}
  We will prove statement~\ref{adv:full:suff} first. Since $A_i$ and
  $G_i$ are independent Poisson random variables, the moment
  generating function (MGF) of $Q_i$ is
  \begin{equation*}
    \mbox{MGF}_{Q_i}(r) = \exponent{- \lambda_G + \lambda_G e^r - \lambda_A +
      \lambda_A e^{-r}} \ \ \forall r \in \Re.
  \end{equation*}
  Since $\mbox{MGF}_{-A}(r) = \mbox{MGF}_A(-r)$ and using the Chernoff
  bound for $-{Q_i} > 0,$ we have
  \begin{displaymath}
    \prob{{Q_i} < 0} \leq \inf_{r \geq 0} \mbox{ MGF}_{Q_i}(-r) =
    \exponent{-\lambda_G - \lambda_A + 2 \sqrt{\lambda_G \lambda_A}}.
  \end{displaymath}
  The complement of this probability gives the required lower bound on
  $\prob{Q_i > 0}.$ Further, use $\lambda_G = \bar{\gamma} \lambda, \
  \lambda_A = \gamma \lambda,$ where $0 < \gamma < 1/2,$ in the
  preceding expression to get
  \begin{equation}
    \prob{{Q_i} > 0} \geq 1 - \exponent{- \left(1 - 2 \sqrt{\gamma
          \bar{\gamma}} \right)\lambda}. \label{eqn:adv:chernoff}
  \end{equation}
  Note that for $0 < \gamma < (1/2),$ the function $\gamma (1 -
  \gamma)$ is concave increasing and has a supremum value of $1/4$ at
  $\gamma = 0.5.$ Thus $1 - 2 \sqrt{\gamma \bar{\gamma}} > 0,$ proving
  that for $\lim_{n \to \infty} \prob{{Q_i} > 0} = \lim_{\lambda \to
    \infty} \prob{{Q_i} > 0} = 1.$ Using independence of $Q_i$ and $c
  := 1 - 2 \sqrt{\gamma \bar{\gamma}},$ we see that
  \begin{equation*}
    \prob{\mbox{all $n$ target locations are separable}} =
    \prod_{i=1}^n \prob{Q_i > 0} \geq \left(1 - e^{- c \lambda} \right)^n.
  \end{equation*}
  Using the inequality in~\eqref{eq:braveinequality}, and $\lambda =
  \frac{m}{n} \geq \frac{(1 + \epsilon) \log n}{c},$ we have:
  \begin{equation*}
    \prod_{i=1}^n \prob{Q_i > 0} \geq \exponent{- \frac{n}{e^{c
          \lambda} - 1}} \geq \exponent{- \frac{1}{n^{\epsilon} - n^{-1}}} \to 1
  \end{equation*}
  thus proving statement~\ref{adv:full:suff}. It is easy to see from
  statement~\ref{fixed:full:mn:nec} of
  Theorem~\ref{thm:separability_deterministic} that if $\frac{m}{n}
  \leq \log n - c_n$ for any $c_n \to \infty,$ then $\prob{\mbox{all
      $n$ target locations are separable}} \to 0.$ Next we prove
  statement~\ref{adv:partial:suff}. Using~\eqref{eq:reversemarkov}
  and~\eqref{eqn:adv:chernoff} with $c = 1 - 2 \sqrt{\gamma
    \bar{\gamma}},$ we see that
  \begin{eqnarray*}
    \prob{\mbox{at least $\alpha n$ target locations are separable}} &
    = & \prob{\sum_{i=1}^n \indicator(Q_i > 0) \geq \alpha n} \\
    & \geq & \frac{ \prob{Q_i > 0} - \alpha}{1 - \alpha} \geq \frac{1
      - \alpha - e^{-c \lambda}}{1 - \alpha}.
  \end{eqnarray*}
  Using $\frac{m}{n} = \lambda > \frac{1}{1 - 2 \sqrt{\gamma
      \bar{\gamma}}} \ \log \left( \frac{1}{(1 - \alpha) (1 - \beta) }
  \right)$ in the preceding expression, it is easy to see that
  \begin{equation*}
    \prob{\mbox{at least $\alpha n$ target locations are separable}}> \beta.
  \end{equation*}
  This completes the proof of statement~\ref{adv:partial:suff}. Using
  the necessary condition from
  Theorem~\ref{thm:separability_deterministic} and arguing as above,
  we see that $m(n) \in \Theta(n).$
\end{proof}
Extending the adversarial setting to the random targets case is
identical to the discrete grid setting discussed above and the results
are similar to their (no adversaries) ideal binary proximity sensor
counterparts, and is hence omitted.

\section{Conclusion}
\label{sec:conclusions}

The separability of an asymptotically large number of static target
locations with binary proximity sensors has been addressed. Target
locations are modeled as a set of deterministic grid points or by
realizations of independent and uniform random variables. Sensor
locations were static and lack of control in their deployment was
modeled by independent and uniform random variables.  Order-optimal
scaling laws for full and partial separability were derived in this
work. For $n$ target locations, where $n \rightarrow \infty$, the
number of sensors needed for full and partial separability in the
deterministic grid case were $\Theta(n \log n)$ and $\Theta(n)$,
respectively.  When target locations are obtained from uniform random
variables, then the number of sensors needed for full and partial
separability were $\omega(n^2 \log n)$ and $\Theta(n)$
respectively. Choices for sensing radius, which is a design parameter,
in various cases were provided. The conditions for separability in two
dimensions were derived. Finally, it was shown that in the presence of
adversarial sensors the scaling laws for separability remain
unaffected.

\bibliography{references}

\end{document}